\tikzstyle{min}=[thick,circle,draw,minimum size=1.3em,inner sep=0em,text centered]
\def\@copyrightpermission{}
\def\@copyrightowner{}
\def\@copyrightyear{}
\def\@mkbibcitation{}
\def\maketitle{\@beginmaketitlehook
  \@ACM@maketitle@typesettrue
  \if@ACM@anonymous
    \ifnum\num@authorgroups=0\author{}\fi
  \fi
  \begingroup
  \let\@vspace\@vspace@orig
  \let\@vspacer\@vspacer@orig
  \let\@footnotemark\@footnotemark@nolink
  \let\@footnotetext\@footnotetext@nolink
  \renewcommand\thefootnote{\@fnsymbol\c@footnote}%
  \hsize=\textwidth
  \def\@makefnmark{\hbox{\@textsuperscript{\@thefnmark}}}%
  \@mktitle\if@ACM@sigchiamode\else\@mkauthors\fi\@mkteasers
  \@printtopmatter
  \if@ACM@sigchiamode\@mkauthors\fi
  \setcounter{footnote}{0}%
  \def\@makefnmark{\hbox{\@textsuperscript{\normalfont\@thefnmark}}}%
  \@titlenotes
  \@subtitlenotes
  \@authornotes
  \let\@makefnmark\relax
  \let\@thefnmark\relax
  \let\@makefntext\noindent
  \ifx\@empty\thankses\else
    \footnotetextauthorsaddresses{%
      \def\par{\let\par\@par}\parindent\z@\@setthanks}%
  \fi
  \ifx\@empty\@authorsaddresses\else
     \if@ACM@anonymous\else
       \if@ACM@journal@bibstrip
         \footnotetextauthorsaddresses{%
           \def\par{\let\par\@par}\parindent\z@\@setauthorsaddresses}%
       \fi
     \fi
  \fi
  \if@ACM@nonacm\else\footnotetextcopyrightpermission{%
    \if@ACM@authordraft
        \raisebox{-2ex}[\z@][\z@]{\makebox[0pt][l]{\large\bfseries
            Unpublished working draft. Not for distribution.}}%
       \color[gray]{0.9}%
    \fi
    \parindent\z@\parskip0.1\baselineskip
    \if@ACM@authorversion\else
      \if@printpermission\@copyrightpermission\par\fi
    \fi
    \if@ACM@manuscript\else
       \if@ACM@journal@bibstrip\else %
       {\itshape \acmConference@name}. %
       \fi
    \fi
     \fi}
    \fi
  \endgroup
  \setcounter{footnote}{0}%
  \@mkabstract
  \if@ACM@printccs
  \ifx\@concepts\@empty\else\bgroup
      {\@specialsection{CCS Concepts}%
         \noindent\@concepts\par}\egroup
     \fi
   \fi
   \ifx\@keywords\@empty\else\bgroup
      {\if@ACM@journal
         \@specialsection{Additional Key Words and Phrases}%
       \else
         \@specialsection{Keywords}%
       \fi
         \noindent\@keywords\par}\egroup
   \fi
  \let\metadata@authors=\authors
  \nxandlist{, }{, }{, }\metadata@authors
  \def\@ACM@checkaffil{}%
  \hypersetup{%
    pdfauthor={\metadata@authors},
    pdftitle={\@title},
    pdfsubject={\@concepts},
    pdfkeywords={\@keywords},
    pdfcreator={LaTeX with aamas
      \csname ver@aamas.cls\endcsname\space
      and hyperref
      \csname ver@hyperref.sty\endcsname}}%
  \andify\authors
  \andify\shortauthors
  \global\let\authors=\authors
  \global\let\shortauthors=\shortauthors
  \if@ACM@printacmref
     \@mkbibcitation
  \fi
  \global\@topnum\z@ %
  \global\@botnum\z@ %
  \@printendtopmatter
  \@afterindentfalse
  \@afterheading
}
\title[Patrolling]{Minimizing Expected Intrusion Detection Time\\ in Adversarial Patrolling}
\author{David Kla\v{s}ka}
\affiliation{
  \institution{Masaryk University}
  \city{Brno}
  \country{Czechia}}
\email{david.klaska@mail.muni.cz}
\author{Anton\'{\i}n Ku\v{c}era}
\affiliation{
  \institution{Masaryk University}
  \city{Brno}
  \country{Czechia}}
\email{tony@fi.muni.cz}
\author{V\'{\i}t Musil}
\affiliation{
  \institution{Masaryk University}
  \city{Brno}
  \country{Czechia}}
\email{musil@fi.muni.cz}
\author{Vojt\v{e}ch \v{R}eh\'{a}k}
\affiliation{
  \institution{Masaryk University}
  \city{Brno}
  \country{Czechia}}
\email{rehak@fi.muni.cz}
\begin{abstract}
In adversarial patrolling games, a mobile Defender strives to discover intrusions at vulnerable targets initiated by an Attacker.
The Attacker's utility is traditionally defined as the probability of completing an attack, possibly weighted by target costs.
However, in many real-world scenarios, the actual damage caused by the Attacker depends on the \emph{time} elapsed since the attack's initiation to its detection.
We introduce a formal model for such scenarios, and we show that the Defender always has an \emph{optimal} strategy achieving maximal protection.
We also prove that \emph{finite-memory} Defender's strategies are sufficient for achieving protection arbitrarily close to the optimum.
Then, we design an efficient \emph{strategy synthesis} algorithm based on differentiable programming and gradient descent.
We evaluate the efficiency of our method experimentally.
\end{abstract}
\keywords{Strategy synthesis, Security Games, Adversarial Patrolling}
\newcommand{\tm}{\mathit{tm}}
\newcommand{\Nset}{\mathbb{N}}
\newcommand{\Rset}{\mathbb{R}}
\newcommand{\Exp}{\mathbb{E}}
\newcommand{\calB}{\mathcal{B}}
\newcommand{\calE}{\mathcal{E}}
\newcommand{\calF}{\mathcal{F}}
\newcommand{\calI}{\mathcal{I}}
\newcommand{\calL}{\mathcal{L}}
\newcommand{\calH}{\mathcal{H}}
\newcommand{\calW}{\mathcal{W}}
\newcommand{\calD}{\mathcal{D}}
\newcommand{\wait}{\mathit{wait}}
\newcommand{\attack}{\mathit{attack}}
\newcommand{\Val}{\operatorname{Val}}
\newcommand{\Obs}{\Omega}
\newcommand{\walk}{\mathit{walk}}
\newcommand{\NP}{\mathsf{NP}}
\newcommand{\PSPACE}{\mathsf{PSPACE}}
\newcommand{\prob}{\mathbb{P}}
\newcommand{\Dist}{\mathit{Dist}}
\newcommand{\Attack}{\mathit{Att}}
\newcommand{\dhat}[1]{\smash{\hat{#1}}}
\newcommand{\Reg}{\mathit{Reg}}
\newcommand{\newmin}{\mathop{\mathrm{min}\vphantom{\mathrm{sup}}}}
\newcommand\mem{\operatorname{mem}}
\let\hat\widehat
\DeclareRobustCommand\onedot{\futurelet\@let@token\@onedot}
\def\@onedot{\ifx\@let@token.\else.\null\fi\xspace}
\newcommand\ie{{i.e}\onedot}
\newcommand{\tran}[1]{\stackrel{\raisebox{-.3ex}{\scriptsize$#1$}}{\rightarrow}}
\newcommand{\commentsymbol}{//}%
\algrenewcommand\algorithmiccomment[1]{\hfill\textcolor{cyan}{\commentsymbol{} #1}}
\newcommand{\LineComment}[2][1]{\Statex \hspace{#1\dimexpr\algorithmicindent}\textcolor{cyan}{\commentsymbol{} #2}}
\begin{document}

\pagestyle{fancy}
\fancyhead{}

\maketitle 

\section{Introduction}
\label{sec-intro}

This paper follows the \emph{security games} line of work studying optimal allocation of limited security resources for achieving optimal target coverage \cite{Tambe:book}.
Practical applications of security games include the deployment of police checkpoints at the Los Angeles International Airport \cite{PJMOPTWPK:Deployed-ARMOR}, the scheduling of federal air marshals over the U.S.{} domestic airline flights \cite{TRKOT:IRIS}, the arrangement of city guards in Los Angeles Metro \cite{DJYZTKS:patrolling-uncertainty-JAIR}, the positioning of U.S.{} Coast Guard patrols to secure selected locations \cite{ASYTBDMM:Protect-AIMagazine}, and also applications to wildlife protection~\cite{FKDYT:PAWS,WSYWSJF:Patrolling-learning,Xu:Green-security}.

\emph{Patrolling games} are a special type of security games where a mobile Defender moves among protected targets with the aim of detecting possible incidents. Compared with static monitoring facilities such as sensor networks or surveillance systems, patrolling is more flexible and less costly on implementation and maintenance. Due to these advantages \cite{YJC:multi-robot-coordination-survey}, patrolling is indispensable in detecting crimes \cite{JVP:maritime-security-IS,CCW:cooperative-police-patrol}, managing disasters \cite{MCCMO:multiUAV-disaster},  wildlife protection \cite{WSYWSJF:Patrolling-learning,Xu:Green-security}, etc. Many works consider human Defenders such as police squads or rangers \cite{WSYWSJF:Patrolling-learning} where the patrolling horizon is bounded. Recent technological advances motivate the study of robotic patrolling with unbounded horizon where the Defender is an autonomous device operating for a long time without interruption.  

Most of the existing patrolling models can be classified as either \emph{regular} or \emph{adversarial}.
Regular patrolling can be seen as a form of surveillance where the Defender aims at discovering accidents as quickly as possible by minimizing the time lag between two consecutive visits for each target. Here, a Defender's strategy is typically a single path or a cycle visiting all targets. 
In adversarial patrolling, the Defender strives to protect the targets against an Attacker exploiting the best attack opportunities maximizing the damage. The solution concept is typically based on Stackelberg equilibrium \cite{YKKCT:Stackelberg-Nash-security,SFAKT:Stackelberg-Security-Games}, where the Defender commits to a strategy~$\gamma$ and the Attacker follows by selecting a strategy $\pi$ maximizing the expected Attacker's utility against~$\gamma$. Defender's strategies are typically \emph{randomized} so that the Attacker cannot foresee the next Defender's moves, and the Defender aims at maximizing the probability of discovering an attack before its completion. The adversarial model is also appropriate in situations when a certain protection level must be guaranteed even if the accidents happen at the least convenient moment.

In infinite-horizon adversarial patrolling models, every target~$\tau$ is assigned a finite \emph{resilience} $d(\tau)$, and an attack at $\tau$ is discovered if the Defender visits $\tau$ in the next $d(\tau)$ time units. Although this model is adequate in many scenarios, it is not applicable when the actual damage depends on the \emph{time elapsed since initiating the attack}. For example, if the attack involves setting a fire, punching a hole in a fuel tank, or setting a trap, then the associated damage \emph{increases with time}. In this case, the Defender should aim at \emph{minimizing the expected attack discovery time} rather than maximizing the probability of visiting a target before a deadline. We refer to Section~\ref{sec-evaluating} for a more detailed discussion.

In this work, we formalize the objective of minimizing the expected attack discovery time in infinite-horizon adversarial patrolling, and we design an efficient strategy synthesis algorithm. 
We start by fixing a suitable formal model. The terrain is modeled by the standard patrolling graph, and the Defender's/Attacker's strategies are also defined in the standard way. However, the expected damage caused by attacking a target $\tau$ is defined as the expected time of visiting $\tau$ by the Defender since initiating the attack, multiplied by the target cost $\alpha(\tau)$. Intuitively, $\alpha(\tau)$ is the ``damage per time unit'' when attacking $\tau$. We use Stackelberg equilibrium as the underlying solution concept, and define the \emph{protection value} of a given Defender's strategy $\gamma$ as the  expected attack discovery time (weighted by target costs) guaranteed by $\gamma$ against an arbitrary Attacker's strategy.  
 
In general, a Defender's strategy $\gamma$ may randomize and the choice of the next move may depend on the whole history of moves. The randomization is crucial for increasing protection value (a concrete example is given in Section~\ref{sec-evaluating}). Since general strategies are not finitely representable, they are not algorithmically workable. Recent results on infinite-horizon adversarial patrolling \cite{KKLR:patrol-gradient,KKMR:Regstar-UAI} identify the subclass of \emph{regular} Defender's strategies as sufficiently powerful to maximize the probability of timely attack discovery. Here, a strategy is \emph{regular} if it uses finite memory and rational probability distributions. However, it is not clear whether regular strategies are equivalently powerful as general strategies when minimizing the expected attack discovery time. Perhaps surprisingly, we show that the answer is \emph{positive}, despite all issues caused by specific properties of this objective. More precisely, we prove that \emph{the limit protection value achievable by regular strategies is the same as the limit protection value achievable by general strategies}. This non-trivial result is based on deep insights into the structure of (sub)optimal Defender's strategies.

Our second main contribution is an algorithm synthesizing a regular Defender's strategy and its protection value for a given patrolling graph. We show that the protection value of a regular strategy is a differentiable function, and we proceed by designing an efficient strategy improvement procedure based on gradient descent. 

We evaluate our algorithm experimentally on instances of considerable size. Since our work initiates the study of infinite-horizon adversarial patrolling with the expected attack discovery time, there is no baseline set by previous works. To estimate the quality of the constructed regular strategies, we consider instances where the optimal protection value can be determined by hand, but constructing the associated Defender's strategy is sufficiently tricky to properly examine the capabilities of our strategy synthesis algorithm. 

The experiments also show that our algorithm is sufficiently fast for recomputing a patrolling strategy \emph{dynamically} when the underlying patrolling graph changes due to unpredictable external events. Hence, the applicability of our results is not limited just to static scenarios. 
\smallskip

\noindent
\textbf{Our main contribution} can be summarized as follows:
\begin{itemize}
  \item We propose a formal model for infinite-horizon adversarial patrolling where the damage caused by attacking a target depends on the time needed to discover the attack.
  \item We prove that regular strategies can achieve the same limit protection value as general strategies.
  \item We design an efficient algorithm synthesizing a regular Defender's strategy for a given patrolling graph, and we evaluate its functionality experimentally.   
\end{itemize}

\subsection{Related work}
The literature on regular and adversarial patrolling is rich; the existing overviews include
\cite{HZHH:multi-robot-patrol-survey,ARSTMCC:multi-patrolling-survey,PR:multi-patrolling-survey}. We give a summary of previous results on infinite-horizon adversarial patrolling, which is perhaps closest to our work. 

Most of the existing results concentrate on computing an optimal moving strategy for certain topologies of admissible moves. The underlying solution concept is the \emph{Stackelberg equilibrium} \cite{SFAKT:Stackelberg-Security-Games,YKKCT:Stackelberg-Nash-security}, where the Defender/Attacker play the roles of the \mbox{Leader/Follower}. 

For general topologies, the existence of a perfect Defender's strategy discovering all attacks in time is $\PSPACE$-complete \cite{HO:UAV-problem-PSPACE}. Consequently, computing an optimal Defender's strategy is $\PSPACE$-hard. Moreover, computing an $\varepsilon$-optimal strategy for $\varepsilon \leq 1/2n$, where $n$ is the number of vertices, is $\NP$-hard \cite{KKR:patrol-drones}. Hence, no feasible strategy synthesis algorithm can \emph{guarantee} (sub)optimality for all inputs, and finding high-quality strategy in reasonable time is challenging.

The existing strategy synthesis algorithms are based either on mathematical programming, reinforcement learning, or gradient descent. The first approach suffers from scalability issues caused by non-linear constraints \cite{BGA:large-patrol-AI,BGA:patrolling-arbitrary-topologies}. Reinforcement learning has so far been successful mainly for patrolling with finite horizon, such as green security games \cite{WSYWSJF:Patrolling-learning,BAVT:learn-preventive-healthcare,Xu:Green-security,KMZGA:moving_targets}. 
Gradient descent techniques for finite-memory strategies \cite{KL:patrol-regular,KKLR:patrol-gradient,KKMR:Regstar-UAI} are applicable to patrolling graphs of reasonable size. 

Patrolling for restricted topologies has been studied for lines, circles  \citep{AKK:multi-robot-perimeter-adversarial,ASKK:perimeter-patrol}, or fully connected environments \citep{BKR:patrol-Internet}. Apart of special topologies, specific variants and aspects of the patrolling problem have been studied, including moving targets \cite{bosansky2011aamas,Fang2013}, multiple patrolling units \cite{Basilico2010}, movement of the Attacker on the graph \cite{Basilico2009-2}, or reaction to alarms \cite{MunozdeCote2013,BNG:patrolling-alarm}.

\section{The Model}
\label{sec-model}

In this section we introduce a formal model for infinite-horizon patrolling where the Defender aims at minimizing the expected attack discovery time. The terrain (protected area) is modeled by the standard patrolling graph \cite{BGA:patrolling-arbitrary-topologies,BGA:large-patrol-AI,KKLR:patrol-gradient,KKMR:Regstar-UAI}. We use the variant where time is spent by performing edges \cite{KKLR:patrol-gradient,KKMR:Regstar-UAI} rather than by staying in vertices \cite{BGA:patrolling-arbitrary-topologies,BGA:large-patrol-AI}. The model of Defender/Attacker is also standard \cite{KL:patrol-regular,KKR:patrol-drones} (the study of patrolling models usually starts by considering the scenario with one Defender and one Attacker, and we stick to this approach). The new ingredient of our model in the way of evaluating the protection achieved by Defender's strategies (see Section~\ref{sec-evaluating}), and here we devote more space to explaining and justifying our definitions.

In the rest of this paper, we use $\Nset$ and $\Nset_+$ to denote the sets of non-negative and positive integers. We assume familiarity with basic notions of probability, Markov chain theory, and calculus.

\subsection{Terrain model}

Locations where the Defender selects the next move are modeled as vertices in a directed graph. The edges correspond to admissible moves, and are labeled by the corresponding traversal time. The protected targets are a subset of vertices with integer weights representing their costs. Formally, a \emph{patrolling graph} is a tuple $G = (V,T,E,\tm,\alpha)$ where
\begin{itemize}
	\item $V$ is a finite set of \emph{vertices} (Defender's positions);
	\item $T \subseteq V$ is a non-empty set of \emph{targets}; 
	\item $E \subseteq V \times V$ is a set of \emph{edges} (admissible moves);
	\item $\tm\colon E\to\Nset_+$ specifies the traversal time of an edge;
	\item $\alpha\colon T\to\Rset_+$ defines the costs of targets.
\end{itemize}
We require that $G$ is strongly connected, i.e., for all $v,u \in V$ there is a path from $v$ to $u$. We write \mbox{$u\to v$} instead of $(u,v)\in E$, and use $\alpha_{\max}$ to denote the maximal target cost.

The set of all non-empty finite and infinite paths in $G$ are denoted by $\calH$ (\emph{histories}) and $\calW$ (\emph{walks}), respectively. For a given history $h = v_1,\ldots,v_n$, we use $\tm(h) = \sum_{i=1}^{n-1} \tm(v_i, v_{i+1})$ to denote the total traversal time of~$h$.

\subsection{Defender and Attacker}
\label{sec-Defender-strategy}
We adopt a simplified patrolling scenario with one Defender and one Attacker. In the rest of this section, let $G$ be a fixed patrolling graph.

\subsubsection{Defender.}
\label{sec-defender}
A \emph{Defender's strategy} is a function $\gamma$ assigning to every history $h \in \calH$ of Defender's moves a probability distribution on $V$ such that $\gamma(h)(v) > 0$ only if $hv \in \calH$, i.e., $u \to v$ where $u$ is the last vertex of~$h$. We also use $\walk(h)$ to denote the set of all walks initiated by a given $h \in \calH$.

For every \emph{initial vertex} $v$ where the Defender starts patrolling, the strategy $\gamma$ determines the probability space $(\calW,\calF,\prob^{\gamma,v})$ over the walks in the standard way, i.e., $\calF$ is the $\sigma$-field generated by all $\walk(h)$ where $h \in \calH$, and $\prob^{\gamma,v}$ is the unique probability measure satisfying $\prob^{\gamma,v}(\walk(h)) = \prod_{i=1}^{n-1}\gamma(v_1,\ldots,v_i)(v_{i+1})$ for every history $h = v_1,\ldots,v_n$ where $v_1 = v$ (if $v_1 \neq v$, we have that $\prob^{\gamma,v}(\walk(h)) = 0$). We use $\Exp^{\gamma,v}[R]$ to denote the expected value of a random variable $R$ in this probability space.

\subsubsection{Attacker.}
The Attacker observes the history of Defender's moves and decides whether and where to initiate an attack. In general, the Attacker may be able to determine the next  Defender's move right after the Defender leaves the vertex currently visited. For the Attacker, this is the best moment to attack, because initiating an attack in the middle of a Defender's move can only decrease the Attacker's utility (cf.{} Section~\ref{sec-evaluating}).

Formally, an \emph{observation} is a sequence $o =
v_1,\ldots, v_n, v_n {\rightarrow} v_{n+1}$, where $v_1,\ldots, v_n$ is a path in~$G$. Intuitively, $v_1,\ldots, v_n$ is the sequence of vertices visited by the Defender, $v_n$ is the currently visited vertex, and $v_n {\rightarrow} v_{n+1}$ is the edge taken next. 
The set of all observations is denoted by $\Obs$. 

An \emph{Attacker's strategy} is a function
\mbox{$\pi\colon \Obs \rightarrow \{\wait,\attack_\tau:\tau\in T\}$}. We require that
if $\pi(v_1,\ldots, v_n,v_n {\rightarrow} u) = \attack_\tau$ for some $\tau \in T$,
then $\pi(v_1,\ldots, v_i,v_i {\rightarrow} v_{i+1}) = \wait$ for all $1\leq
i<n$, i.e., the Attacker exploits  an optimal attack opportunity.

\subsection{Protection value} 
\label{sec-evaluating}

Suppose the Defender commits to a strategy $\gamma$ and the Attacker selects a strategy $\pi$. The \emph{expected damage} caused by $\pi$ against $\gamma$ is the expected time to discover an attack scheduled by $\pi$ weighted by target costs. 

More precisely, we say that a target $\tau$ is \emph{attacked along a walk $w = v_1,v_2,\ldots$} if  $\pi(v_1,\ldots,v_n,v_n\rightarrow v_{n+1}) = \tau$ for some $n$. Note that the index $n$ is unique if it exists. Let $m > n$ be the least index such that $v_m = \tau$. If no such $m$ exists, we say that the attack along $w$ is \emph{not discovered}. Otherwise, the attack is \emph{discovered in time} $\tm(v_{n},\ldots,v_m)$. 

Let $\calD^{\pi} : \calW \rightarrow \Nset_{\infty}$ be a function defined as follows:
\begin{equation*}
\calD^{\pi}(w) =
\begin{cases}
   \ell\cdot\alpha(\tau) & \parbox[t]{.6\textwidth}{if $\tau$ is attacked along $w$ and the attack is\\ discovered in time $\ell$;}\\   
   \infty & \parbox[t]{.6\textwidth}{if $\tau$ is attacked along $w$ and the attack is not\\ discovered;}\\
   0 & \mbox{if no target is attacked along $w$.}
\end{cases}
\end{equation*}
The expected damage caused by $\pi$ against $\gamma$ initiated in $v$ is defined as $\Exp^{\gamma,v}[\calD^{\pi}]$. Since the Defender may choose the initial vertex~$v$, we define the \emph{protection value achieved by $\gamma$} and the \emph{limit protection value} as follows:
\begin{eqnarray}
   \Val(\gamma) & = & \newmin_{v} \sup_{\pi}\    \Exp^{\gamma,v}[\calD^{\pi}]\label{eq:valgamma-def}\\[1ex]
   \Val         & = & \inf_{\gamma}\ \Val(\gamma)
\end{eqnarray} 
We say that a Defender's strategy $\gamma$ is \emph{optimal} if $\Val(\gamma) = \Val$.
 
\subsubsection{Discussion}

In this section, we discuss possible alternative approaches to formalizing the objective of discovering an initiated attack as quickly as possible. 

Note that this objective is implicitly taken into account in regular patrolling where the Defender aims at minimizing the time lag between two consecutive visits for each target (see Section~\ref{sec-intro}). For randomized strategies, one may try to minimize the \emph{expected time lag} between two consecutive visits for each target. At first glance, this objective seems similar to minimizing $\sup_{\pi}\ \Exp^{\gamma,v}[\calD^{\pi}]$. In reality, the objective is different and problematic. To see this, consider the trivial patrolling graph of Fig.~\ref{fig-renewal}a with two targets $\tau_1,\tau_2$ and four edges (incl.{} two self-loops) with traversal time~$1$. The costs of both targets are equal to~$1$. A natural strategy $\gamma_1$ for patrolling these targets is a deterministic loop alternately visiting $\tau_1$ and $\tau_2$  (see Fig~\ref{fig-renewal}b). Then, the maximal expected time lag between two consecutive visits of a target is~$2$, and we also have that $\Val(\gamma_1) = 2$. However, consider the randomized strategy $\gamma_2$ of Fig.~\ref{fig-renewal}c. In the target currently visited, the Defender performs the self-loop with probability $0.99$, and with the remaining probability $0.01$, the Defender moves to the other target (see the dashed arrows in Fig.~\ref{fig-renewal}c). For $\gamma_2$, the maximal expected time lag between two consecutive visits of a target is \emph{again equal to~$2$} (in Markov chain terminology, the stationary distribution determined by $\gamma_2$ assigns $1/2$ to each target, and hence the mean recurrence time is equal to $2$ for each target). Hence, if we adopted minimizing the maximal expected time lag between two consecutive visits of a target as the Defender's objective, the strategies $\gamma_1$ and $\gamma_2$ would be \emph{equivalently good}, despite the fact that the expected time to visit $\tau_2$ from $\tau_1$ is $100$ when the Defender commits to $\gamma_2$. Observe that the difference between $\gamma_1$ and $\gamma_2$ is captured properly by our approach.

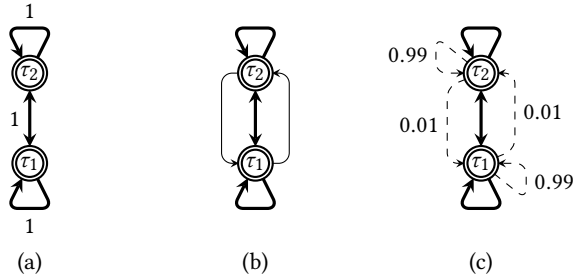
\begin{figure}[t]\centering
\begin{tikzpicture}[x=3cm, y=1.2cm]
\foreach \x/\c/\l in {0/0/a,1/1/b,2/2/c}{%
    \coordinate (a) at (\c,0);
    \node at ($(a) +(0,-1.1)$) {(\l)};
    \node [min,double] (T1\x) at (a) {$\tau_1$};
    \node [min,double] (T2\x) at ($ (a) + (0,1) $) {$\tau_2$};
    \ifthenelse{\x=0}{\draw[stealth-stealth,very thick] (T1\x) -- node[left] {$1$}  (T2\x);
                      \draw [-stealth,very thick,rounded corners] (T1\x) -- +(.1,-.5) --node[below] {$1$} +(-.1,-.5) -- (T1\x);  
                      \draw [-stealth,very thick,rounded corners] (T2\x) -- +(.1,.5) --node[above] {$1$} +(-.1,.5) -- (T2\x); 
                     }%
                     {\draw[stealth-stealth,very thick] (T1\x) --  (T2\x);
                      \draw [-stealth,very thick,rounded corners] (T1\x) -- +(.1,-.5) -- +(-.1,-.5) -- (T1\x);  
                      \draw [-stealth,very thick,rounded corners] (T2\x) -- +(.1,.5) -- +(-.1,.5) -- (T2\x);}
    \ifthenelse{\x=1}{%
    \draw[-stealth,rounded corners] (T1\x) -- ($(T1\x) +(.15,0)$) -- ($(T2\x) +(.15,0)$) -- (T2\x); 
    \draw[-stealth,rounded corners] (T2\x) -- ($(T2\x) +(-.15,0)$) -- ($(T1\x) +(-.15,0)$) -- (T1\x); 
    }{} 
    \ifthenelse{\x=2}{%
    \draw[-stealth,rounded corners,dashed] (T1\x) -- ($(T1\x) +(.15,.15)$) -- node[right]{$0.01$} ($(T2\x) +(.15,0)$) -- (T2\x); 
    \draw[-stealth,rounded corners,dashed] (T2\x) -- ($(T2\x) +(-.15,-.15)$) -- node[left]{$0.01$} ($(T1\x) +(-.15,0)$) -- (T1\x);
    \draw [-stealth,rounded corners,dashed] (T1\x) -- +(.2, -.4) -- node[right]{$0.99$} +(.2,0) -- (T1\x);  
    \draw [-stealth,rounded corners,dashed] (T2\x) -- +(-.2, .4) -- node[left]{$0.99$} +(-.2,0) -- (T2\x);  
                      
    }{} 
  
}
\end{tikzpicture}
\caption{We compare two strategies (b) and (c) on graph (a). Strategy (b) patrols the targets in a deterministic loop, while strategy (c) applies randomization. Both have the same expected time lag between two consecutive visits to a target but differ in the protection significantly.}
\label{fig-renewal}
\end{figure}

Let us note that although the example of Fig.~\ref{fig-renewal} contains self-loops (which do not appear in real-world patrolling graphs), these can easily be avoided by inserting auxiliary vertices so that the demonstrated deficiency is still present.  

One may still argue that the above problem is caused just by allowing the Defender to use randomized strategies. This is a valid objection. So, the question is whether randomization is really needed, i.e., whether the Defender can achieve strictly better protection by using randomized strategies. 

Consider the graph of Fig.~\ref{fig-adeversarial} with two targets $\tau_1$ and $\tau_2$ where $\alpha(\tau_1) = 1$, $\alpha(\tau_2) =2$, and the traversal time of every edge is~$1$. 

The protection achieved by an arbitrary deterministic strategy is not better than~$8$, because the Attacker can wait until the Defender starts moving from $\tau_2$ to $v$ so that the next move selected after arriving to $v$ will be edge leading to $\tau_1$. Note that the Attacker knows the Defender's strategy and can observe its moves, and hence he can recognize this attack opportunity. Since the Defender needs at least $4$ time units to return to $\tau_2$, the expected damage caused by this attack is at least~$8$. 

The Attacker's ability to anticipate future Defender's moves can be decreased by randomization. Consider the simple strategy $\sigma_b$ of Fig.~\ref{fig-adeversarial}b, where the Defender moves from $v$ to $\tau_1$ and $\tau_2$ with probability $p$ and $1-p$, respectively. After reaching a target, the Defender returns to $v$. The optimal value of $p$ is $\frac{7}{2} - \frac{\sqrt{41}}{2} \doteq 0.3$, and the expected damage of an optimal attack is then $\doteq 7.7$. Note that the strategy $\sigma_b$ is \emph{memoryless}, i.e., its decisions depend just on the currently visited vertex.

At first glance, it is not clear whether the protection can be improved, because the probability $p$ used by $\sigma_b$ implements an optimal ``balance'' between visiting $\tau_1$ and $\tau_2$ determined by the weights of $\tau_1$ and $\tau_2$. However, consider the finite-memory strategy $\sigma_c$ of  Fig.~\ref{fig-adeversarial}c which is ``almost deterministic'' except for the moment when the Defender returns to $v$ from $\tau_2$. Here, the strategy select the next edge uniformly at random. Then, the expected damage of an optimal attack is~$6$ (the best attack opportunity is to attack $\tau_2$ right after the robot starts moving form $\tau_2$ to $v$. The expected time need to visit $\tau_2$ is then equal to $3$, yielding the expected damage~$6$). Hence, protection is increased not only by randomization, but also by appropriate use of memory.

\begin{figure}[t]\centering
\begin{tikzpicture}[x=2cm, y=1.2cm, scale=0.7,font=\small]
\foreach \x/\c/\l in {0/0/c,1/1.5/b,2/3.5/a}{%
    \coordinate (a) at (0,\c);
    \node at ($(a) +(-.8,0)$) {(\l)};
    \node [min,double] (T1\x) at (a) {$\tau_1$};
    \node [min,double] (T2\x) at ($ (a) + (2,0) $) {$\tau_2$};
    \node [min] (V\x) at ($ (a) + (1,0) $) {$v$};
    \ifthenelse{\x=2}{%
    \draw[stealth-stealth,very thick] (T1\x) -- node[above] {$ $}  (V\x);
    \draw[stealth-stealth,very thick] (V\x) -- node[above] {$ $} (T2\x);
    \node at ($ (a) +(-.3,0)$) {$1$};
    \node at ($ (a) +(2.3,0)$) {$2$};
    }{%
    \draw[stealth-stealth,very thick] (T1\x) -- (V\x);
    \draw[stealth-stealth,very thick] (V\x) -- (T2\x);
    \draw[-stealth,rounded corners] (T1\x) -- ($(T1\x) +(0,.5)$) -- ($(V\x) + (0,.5)$);  
    }
    \ifthenelse{\x=1}{%
    \draw[-stealth,rounded corners,dashed] ($(V\x) +(0,.5)$) -- ($(V\x) +(-.2,.8)$) -- node[above]{$p$} ($(T1\x) +(-.5,.8)$) -- (T1\x); 
    \draw[-stealth,rounded corners,dashed] ($(V\x) +(0,.5)$) -- ($(V\x) +(.2,.8)$) -- node[above]{$1-p$} ($(T2\x) +(.5,.8)$) -- (T2\x); 
    \draw[-stealth,rounded corners] (T2\x) -- ($(T2\x) +(0,.5)$) -- ($(V\x) + (0,.5)$);  
    }{} 
    \ifthenelse{\x=0}{%
       \draw[-stealth,rounded corners] (T2\x) -- ($(T2\x) +(0,-.5)$) -- ($(V\x) + (0,-.5)$);
       \draw[-stealth,rounded corners,dashed] ($(V\x) +(0,-.5)$) -- node[below]{$0.5$} ($(T1\x) +(0,-.5)$) -- (T1\x); 
       \draw[-stealth,rounded corners,dashed] ($(V\x) +(0,-.5)$) -- ($(V\x) +(.2,-.8)$) -- node[below]{$0.5$} ($(T2\x) +(.5,-.8)$) -- (T2\x); 
     \draw[-stealth,rounded corners] ($(V\x) + (0,.5)$) -- ($(T2\x) +(0,.5)$) -- (T2\x);   

    }{}
}
\end{tikzpicture}
\caption{On graph (a), a memoryless randomized strategy (b) is outperformed by a randomized strategy (c) with finite memory.}
\label{fig-adeversarial}
\end{figure}
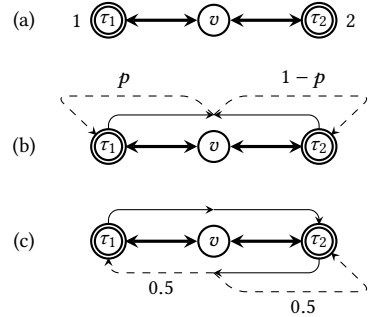

\section{Finite-memory Defender's strategies}
\label{sec-regular}

In this section, we prove that finite-memory Defender's strategies can achieve the same limit protection value as general strategies.

Let $G$ be a patrolling graph. A general Defender's strategy for $G$ (see Section~\ref{sec-defender}) depends on the whole history of moves and cannot be finitely represented. A computationally feasible subclass are \emph{finite-memory} (or \emph{regular}) strategies \cite{KL:patrol-regular,KKLR:patrol-gradient,KKMR:Regstar-UAI} where the relevant information about the history is represented by finitely memory elements assigned to each vertex. 

Formally, let $\mem\colon V\to\Nset$ be a function assigning to every vertex the number of \emph{memory elements}. The set of \emph{augmented vertices} is defined by $\dhat{V}=\{(v,m)\colon v\in V,\,1\le m\le \mem(v)\}$. We use $\dhat{v}$ to denote an augmented vertex of the form $(v,m)$ where $m \leq \mem(v)$.

A \emph{regular} Defender's strategy for $G$ is a  function $\sigma\colon\dhat{V}\to\Dist(\dhat{V})$ where $\sigma(v,m)(v',m')>0$ only if $v\to v'$. We say that $\sigma$ is \emph{unambiguous} if for all $v,v' \in V$ and $m \leq \mem(v)$ there is at most one $m'$ such that $\sigma(v,m)(v',m')>0$.

Intuitively, the Defender starts patrolling in a designated \emph{initial vertex} $v$ with \emph{initial memory element $m$}, and then traverses the vertices of $G$ and updates the memory according to $\sigma$. Hence, the current memory element represents some information about the history of visited vertices.

For every initial $\dhat{v} \in \dhat{V}$, the strategy $\sigma$ determines the probability space over the walks in the way described in Section~\ref{sec-defender}. The only difference is that the probability of $\walk(v_1,\ldots,v_n)$ where $v_1 = v$ is defined as $\sum_{\dhat{v}_2,\ldots,\dhat{v}_n} \prod_{i=1}^{n-1} \sigma(\dhat{v}_i)(\dhat{v}_{i+1})$. Here, $\dhat{v}_1 = \dhat{v}$ is the initial augmented vertex. Hence, the notion of protection value defined in Section~\ref{sec-evaluating} is applicable also to regular strategies (where $\min_v$  is replaced with $\min_{\dhat{v}}$) in~\eqref{eq:valgamma-def}).

An important question is whether regular strategies can achieve the same limit protection value as general strategies. The answer is positive, and it is proven in two steps. First, we show that there exists an \emph{optimal} Defender's strategy $\gamma$ satisfying $\Val(\gamma) = \Val$ (see Section~\ref{sec-evaluating}). Then, for arbitrarily small $\varepsilon > 0$, we prove the existence of a regular strategy $\sigma$ such that $\Val(\sigma) \leq \Val(\gamma) + \varepsilon$. 

\begin{theorem}
\label{thm-optimal}
For every patrolling graph, there exists a Defender's strategy $\gamma$ such that $\Val(\gamma) = \Val$.
\end{theorem}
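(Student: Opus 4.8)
\emph{Proof strategy.}
The plan is to show that the infimum defining $\Val$ is attained, by a compactness‑plus‑lower‑semicontinuity argument over the space of Defender's strategies. A Defender's strategy is a function $\calH\to\Dist(V)$ obeying the edge constraint coordinatewise; since $\calH$ is countable and each $\Dist(V)$ is a compact finite‑dimensional simplex, Tychonoff's theorem makes the set of all Defender's strategies a closed subspace of the compact metrizable product $\prod_{h\in\calH}\Dist(V)$, hence compact and sequentially compact for pointwise convergence, and a pointwise limit of strategies is again a strategy.

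First I would fix a sequence $\gamma_n$ with $\Val(\gamma_n)\to\Val$. For each $n$ the minimum over the finite set $V$ in \eqref{eq:valgamma-def} is attained at some vertex, so by pigeonhole a single vertex $v^\star$ attains it for infinitely many $n$; restricting to that subsequence, $\Exp^{\gamma_n,v^\star}[\calD^\pi]\le\Val(\gamma_n)$ for every Attacker's strategy $\pi$ and all $n$. Using sequential compactness I pass to a further subsequence with $\gamma_n\to\gamma$ pointwise. It then suffices to prove $\sup_\pi\Exp^{\gamma,v^\star}[\calD^\pi]\le\Val$, since this yields $\Val(\gamma)\le\Val$ while $\Val(\gamma)\ge\Val$ holds by definition of $\Val$.

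The core of the argument is the claim that, for a fixed $\pi$, the map $\beta\mapsto\Exp^{\beta,v^\star}[\calD^\pi]$ is lower semicontinuous for pointwise convergence of strategies $\beta$. To prove it I would first note that the observations at which $\pi$ triggers an attack determine pairwise disjoint cylinder events $\{C_o\}_{o\in\calA}$ (the admissibility constraint on $\pi$ forbids one attack‑trigger being a prefix of another), so $\calD^\pi=\sum_{o\in\calA}\mathbf 1_{C_o}\cdot\calD^\pi$ pointwise with at most one nonzero summand per walk. Writing $\calD^\pi=\sup_N(\calD^\pi\wedge N)$ and applying monotone convergence twice gives $\Exp^{\beta,v^\star}[\calD^\pi]=\sup_N\sum_{o\in\calA}\Exp^{\beta,v^\star}[\mathbf 1_{C_o}(\calD^\pi\wedge N)]$. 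The key finite‑horizon estimate is that every edge has traversal time at least $1$ and $\alpha_{\min}:=\min_{\tau\in T}\alpha(\tau)>0$, so on any walk on which the attacked target is not revisited within the first $\lceil N/\alpha_{\min}\rceil$ steps after the attack one already has $\calD^\pi\wedge N=N$; hence $\mathbf 1_{C_o}(\calD^\pi\wedge N)$ is determined by a bounded‑length prefix of the walk, and $\Exp^{\beta,v^\star}[\mathbf 1_{C_o}(\calD^\pi\wedge N)]$ is a finite sum of finite products of transition probabilities $\beta(\cdot)(\cdot)$, i.e.\ a \emph{continuous} function of $\beta$. A countable sum of nonnegative continuous functions is lower semicontinuous, and a supremum of lower semicontinuous functions is lower semicontinuous, so $\beta\mapsto\Exp^{\beta,v^\star}[\calD^\pi]$ and then $\beta\mapsto\sup_\pi\Exp^{\beta,v^\star}[\calD^\pi]$ are lower semicontinuous. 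Evaluating along the sequence, $\Exp^{\gamma,v^\star}[\calD^\pi]\le\liminf_n\Exp^{\gamma_n,v^\star}[\calD^\pi]\le\liminf_n\Val(\gamma_n)=\Val$ for each $\pi$, and taking the supremum over $\pi$ completes the argument.

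I expect the lower‑semicontinuity step to be the main obstacle. The naive worry is exactly that in the limit $\gamma$ an attack may be triggered arbitrarily late, or the discovery time may become infinite with positive probability, so that $\Exp^{\gamma,v^\star}[\calD^\pi]$ is strictly larger than every $\Exp^{\gamma_n,v^\star}[\calD^\pi]$; lower semicontinuity is precisely the statement that the value can only jump \emph{up} in the limit, which is harmless here, and truncation together with the decomposition into disjoint, finite‑prefix‑determined attack events and the bound $\tm\ge 1$ is what upgrades each truncated contribution from merely semicontinuous to continuous. The remaining measure‑theoretic bookkeeping (disjointness of the $C_o$, the two monotone‑convergence steps, closedness of the strategy space) is routine. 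An equivalent packaging of the same idea is to observe that $\calD^\pi$ is a lower‑semicontinuous $[0,\infty]$‑valued function on the compact walk space $\calW$ — a walk that is attacked but never discovered is approached only by walks whose finite discovery times tend to $\infty$ — that $\gamma_n\to\gamma$ pointwise forces $\prob^{\gamma_n,v^\star}\to\prob^{\gamma,v^\star}$ weakly, and then to invoke the portmanteau inequality.
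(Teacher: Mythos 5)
Your proposal is correct and takes essentially the same route as the paper's proof: the paper's diagonal extraction over an enumeration of histories is exactly the sequential compactness of the strategy space that you invoke via Tychonoff, and its restriction to a finite set $O$ of attack observations with finite history sets $H(o)$ is the concrete form of your lower-semicontinuity argument via continuous finite-horizon approximants. The only cosmetic differences are that the paper argues by contradiction rather than directly, and truncates the sums over observations and histories rather than the random variable $\calD^{\pi}\wedge N$.
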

\begin{proof}
By the definition of $\Val$, there exist a vertex $v$ and an infinite sequence $\Gamma = \gamma_1,\gamma_2,\ldots$ of Defender's strategies such that $\Val(\gamma_i) = \sup_{\pi} \Exp^{\gamma_i,v}[\calD^{\pi}]$ for all $i \geq 1$, and the infinite sequence $\Val(\gamma_1),\Val(\gamma_2),\ldots$ converges to $\Val$.
 
Let $\textit{Histories} = h_1,h_2,\ldots$ be a sequence where \emph{every} history occurs exactly once (without any special ordering). Let $\Gamma_0 = \Gamma$. For every $i \geq 1$, we inductively define an infinite sequence of strategies $\Gamma_i$ and a probability distribution $\gamma(h_i)$ over $V$, assuming that $\Gamma_{i-1}$ has already been defined. Let $t$ be the last vertex of $h_i$, and let $\{u_1,\ldots,u_k\}$ be the set of all immediate successors of $t$ in $G$ (i.e., $t \to u_j$ for all $j \leq k$). Since every bounded infinite sequence of real numbers contains an infinite convergent subsequence, there exists an infinite subsequence $\Gamma_i = \varrho_1,\varrho_2,\ldots$ of $\Gamma_{i-1}$ such that the sequence 
\[
   \varrho_1(h_i)(u_j),\ \varrho_2(h_i)(u_j),\ \ldots
\] 
is convergent for every $j \leq k$. We put 
\[
   \gamma(h_i)(u_j) \quad = \quad \lim_{\ell \to \infty}\ \varrho_{\ell}(h_i)(u_j)\,.
\] 
It is easy to check that $\sum_{j=1}^k \gamma(h_i)(u_j) = 1$, i.e., $\gamma(h_i)$ is indeed a distribution on $V$. Hence, the function $\gamma$ is a Defender's strategy, and we show that $\Val(\gamma) = \Val$.

For the sake of contradiction, suppose $\sup_{\pi} \Exp^{\gamma,v}[\calD^{\pi}] - \Val = \delta > 0$. Let $\varepsilon = \delta/4$. Then, there exists an Attackers strategy $\pi^*$ such that 
\begin{equation}
  \Exp^{\gamma,v}[\calD^{\pi^*}] \quad \geq \quad \Val\ +\ \delta\ -\ \varepsilon
\end{equation}
Let $\Attack(\Omega) \subseteq \Omega$ be the set of all observations $o$ such that $\pi^*(o) = \attack_\tau$ for some $\tau \in T$. For every \mbox{$o = v_1,\ldots,v_n,v_n {\to} v_{n+1} \in \Attack(\Omega)$}, let $\walk(o)$ be the set of all walks starting with $v_1,\ldots,v_{n+1}$. Observe that if $o,o' \in \Attack(\Omega)$ and $o \neq o'$, then $\walk(o) \cap \walk(o') = \emptyset$. Furthermore, for every $w \in \calW \smallsetminus \bigcup_{o \in \Attack(\Omega)} \walk(o)$, we have that $\calD^{\pi^*}(w) = 0$. Hence, we obtain
\begin{equation}
   \Exp^{\gamma,v}[\calD^{\pi^*}]  \quad = \quad \sum_{o \in \Attack(\Omega)} \prob^{\gamma,v}(\walk(o)) \cdot \Exp^{\gamma,v}[\calD^{\pi^*} \mid \walk(o)]
\label{eq:exp-decompose}
\end{equation}
where $\Exp^{\gamma,v}[\calD^{\pi^*} \mid \walk(o)]$ is the conditional expected value of $\calD^{\pi^*}$ under the condition that a walk starts with~$o$. (Note that the conditional expectation is undefined
when $\prob^{\gamma,v}(\walk(o))=0$. In that case, ``$0\cdot\textrm{undefined}$'' is interpreted as $0$.) Hence, there exists a \emph{finite} set of observations $O \subseteq \Attack(\Omega)$ such that 
\begin{equation}
   \Exp^{\gamma,v}[\calD^{\pi^*}]  \quad \leq \quad \varepsilon + \sum_{o \in O} \prob^{\gamma,v}(\walk(o)) \cdot \Exp^{\gamma,v}[\calD^{\pi^*} \mid \walk(o)]
\label{eq:exp-bound}
\end{equation}
For each $o = v_1,\ldots,v_n,v_n{\to}v_{n+1} \in O$, let $\tau_o$ be the target attacked by $\pi$ after observing $o$, and let $\calH(o)$ be the set of all histories $h$ initiated in $v_{n+1}$ such that $\tau_o$ is the last vertex of $h$ and $\tau_o$ occurs exactly once in $h$. We use $o \odot h$ to denote the history $v_1,\ldots,v_n,h$. We have that $\Exp^{\gamma,v}[\calD^{\pi^*} \mid \walk(o)]$ is equal to
\begin{equation}
   \sum_{h \in \calH(o)}  \prob^{\gamma,v}\big(\walk(o \odot h) \mid \walk(o)\big) \cdot \big(\tm(h) + \tm(v_n,v_{n+1})\big) \cdot \alpha(\tau_o)
\label{eq:exp-sum}
\end{equation}
where $\prob^{\gamma,v}\big(\walk(o \odot h) \mid \walk(o)\big)$ is the conditional probability of performing a walk starting with $o \odot h$ under the condition that a walk starting with $o$ is performed. Clearly, there exists a \emph{finite} $H(o) \subseteq \calH(o)$ such that the sum~\eqref{eq:exp-sum} decreases at most by $\varepsilon$ when $h$ ranges over $H(o)$ instead of $\calH(o)$. For short, we use $E^{\gamma,v}(o)$ to denote the sum
\begin{equation}
   \sum_{h \in H(o)}  \prob^{\gamma,v}\big(\walk(o \odot h) \mid \walk(o)\big) \cdot \big(\tm(h) + \tm(v_n,v_{n+1})\big) \cdot \alpha(\tau_o)
\label{eq:exp-sum2}
\end{equation}
Hence, 
\begin{equation}
   \Exp^{\gamma,v}[\calD^{\pi^*} \mid \walk(o)] \quad \leq \quad \varepsilon + E^{\gamma,v}(o)
\label{eq:exp-bound2}
\end{equation}
By combining~\eqref{eq:exp-bound} and~\eqref{eq:exp-bound2}, we obtain
\begin{equation}
   \Exp^{\gamma,v}[\calD^{\pi^*}]  \quad \leq \quad 2\varepsilon + \sum_{o \in O} \prob^{\gamma,v}(\walk(o)) \cdot E^{\gamma,v}(o)
\label{eq:gamma-increase}
\end{equation}
Since $\varepsilon = \delta/4$, this implies
\begin{equation}
   \sum_{o \in O} \prob^{\gamma,v}(\walk(o)) \cdot E^{\gamma,v}(o) \quad \geq \quad \Val + \varepsilon
\label{eq:valbound}
\end{equation} 
Let $H = \bigcup_{o \in O} H(o)$. Since $H$ is finite, there exists an index $\ell$ such that \emph{all} elements of $H$ appear among the first $\ell$ elements of $\textit{Histories}$. Consider the sequence $\Gamma_{\ell} = \varrho_1,\varrho_2,\ldots$ and observe that, for all $i\geq 1$,
\begin{eqnarray}
   \Val(\varrho_i) & = & \sup_{\pi} \Exp^{\varrho_i,v}[\calD^{\pi}] \nonumber \\
                   & \geq & \Exp^{\varrho_i,v}[\calD^{\pi^*}] \nonumber \\
                   & \geq & \sum_{o \in O} \prob^{\varrho_i,v}(\walk(o)) \cdot E^{\varrho_i,v}(o)
  \label{eq:Valrho}
\end{eqnarray}
Since the sequence of distributions $\varrho_1(h),\varrho_2(h),\ldots$ converges to $\gamma(h)$ for every $h \in H$, we also obtain that
\begin{equation}
   \lim_{i \to \infty } \sum_{o \in O} \prob^{\varrho_i,v}(\walk(o)) \cdot E^{\varrho_i,v}(o) \ = \ 
   \sum_{o \in O} \prob^{\gamma,v}(\walk(o)) \cdot E^{\gamma,v}(o)
\label{eq:limit}
\end{equation} 
By combining~\eqref{eq:valbound},~\eqref{eq:Valrho}, and~\eqref{eq:limit}, we obtain $\Val(\varrho_j) \geq \Val + \varepsilon/2$ for all sufficiently large $j$. This means that the sequence $\Val(\varrho_1),\Val(\varrho_2), \ldots$ does \emph{not} converge to $\Val$, and we have a contradiction. 
\end{proof}

\begin{theorem}
\label{thm-regular}
Let $G$ be a patrolling graph, and let $\Reg$ be the class of all regular strategies for $G$. Then $\inf_{\sigma \in \Reg} \Val(\sigma) = \Val$.
\end{theorem}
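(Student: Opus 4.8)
\noindent\emph{Proof plan.}
By Theorem~\ref{thm-optimal} we may fix an optimal strategy $\gamma$ and an initial vertex $v$ with $\Val(\gamma)=\sup_\pi\Exp^{\gamma,v}[\calD^\pi]=\Val$. Strong connectivity of $G$ yields a deterministic closed walk through all vertices, which witnesses $\Val<\infty$; hence, considering the Attacker who attacks a target $\tau$ already on the very first move, the expected discovery time $r_\tau$ of such an immediate attack against $\gamma$ started at $v$ satisfies $\alpha(\tau)\cdot r_\tau\le\Val$, and more generally every conditional expected hitting time of $\gamma$ over a positive-probability history is finite (otherwise the corresponding Attacker would force an infinite expected damage). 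Fix $\varepsilon>0$; since $\Val\le\Val(\sigma)$ for every strategy $\sigma$, it suffices to build a regular $\sigma$ with $\Val(\sigma)\le\Val+\varepsilon$.

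The plan is to approximate $\gamma$ by a finite-memory device that \emph{simulates} $\gamma$ on a bounded window of the history and \emph{regenerates} when the window would overflow. For a parameter $N$, the memory of $\sigma$ stores the part $h$ of the current history generated since the last regeneration as long as it has fewer than $N$ edges; in such a state $\sigma$ reproduces $\gamma(h)$ and appends the chosen vertex to $h$. When the window reaches $N$ edges (or, in a variant, at each step independently with a small probability), $\sigma$ regenerates: it executes a short bounded gadget---e.g.\ a fixed closed walk visiting a fixed ``home'' vertex from which every target is quickly reachable---and then clears the window and resumes simulating $\gamma$ from a fresh start. As there are only finitely many paths from the home vertex shorter than $N$ and the gadget is bounded, $\sigma$ is finite-memory, hence regular.

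For the value estimate, fix an Attacker $\pi$ against $\sigma$ and decompose $\Exp^{\sigma,v}[\calD^\pi]$ along the pairwise disjoint events ``$\pi$ attacks for the first time after the history $h$'' (whose probabilities sum to at most $1$). Couple $\sigma$ with the copy of $\gamma$ restarted at each regeneration, so that inside a window $\sigma$ traces a fresh $\gamma$-trajectory from $v$. For an attack on $\tau$ launched with a window of $d<N$ edges: on the event that the coupled $\gamma$-run visits $\tau$ within the remaining $N-d$ window slots, $\sigma$ discovers the attack at the very same moment; on the complementary event---whose conditional probability is $O(1/(N-d))$ times the relevant conditional hitting time of $\gamma$, by Markov's inequality---$\sigma$ additionally spends only the bounded gadget time and then visits $\tau$ in bounded expected time. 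This yields, for the conditional expected discovery time of such an attack, a bound of the form $\Exp^{\gamma,v}[\text{time to visit }\tau\mid h]\cdot(1+O(1/(N-d)))$, which one would then like to reassemble---through the coupling and the inequalities $\alpha(\tau)\cdot r_\tau\le\Val$---into $\Val$ plus an error tending to $0$ as $N\to\infty$.

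I expect the main difficulty, and the reason the statement is genuinely non-trivial and leans on the structural insights advertised in Section~\ref{sec-intro}, to lie in two intertwined issues: (i)~attacks launched when the window is almost full (small $N-d$), where Markov's inequality gives no saving, so that a \emph{predictable} regeneration could be exploited by the Attacker to push $\Val(\sigma)$ above $\Val$ by a constant independent of $N$; and (ii)~the fact that an optimal $\gamma$ may occasionally---with small probability compensated in expectation---drift into histories with very large conditional hitting times, so that a regeneration kicking in only after $\Theta(N)$ steps does not rescue such an attack fast enough, and a naive reassembly into a single $\gamma$-from-$v$ Attacker would over-count across regenerations. Overcoming this requires designing the regeneration mechanism---its timing and its destination---so that it is triggered precisely in ``safe'' configurations and so that the extra cost it incurs is always multiplied by a probability that is $o(1)$ uniformly over all Attacker strategies; this careful accounting, combined with the structure of near-optimal Defender's strategies, is the heart of the proof.
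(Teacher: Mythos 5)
Your high-level plan (take the optimal $\gamma$ from Theorem~\ref{thm-optimal}, build a finite-memory approximation, show the value increases by at most $\varepsilon$) matches the paper's, but the two difficulties you flag at the end are genuine gaps that your construction does not close, and the paper closes them with a different mechanism plus a structural lemma you are missing. The missing lemma is \emph{subgame optimality of $\gamma$}: for every $\gamma$-eligible history $h$ ending in $u$, the shifted strategy $\gamma[h]$ satisfies $\sup_{\pi}\Exp^{\gamma[h],u}[\calD^{\pi}]=\Val$ (proved by contradiction --- if some eligible subgame had value $\Val+\kappa$, the Attacker would wait for $h$ and exploit it, contradicting optimality of $\gamma$). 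This kills your worry~(ii) outright: an optimal $\gamma$ \emph{cannot} drift into eligible histories with large conditional hitting times, since $\Exp^{\gamma[h],u}[\calD^{\pi[\tau]}]\le\Val$ holds uniformly over all eligible $h$ and all targets $\tau$; Markov's inequality then gives $\prob^{\gamma[h],u}(R_\tau\ge 2\Val)\le\tfrac12$ uniformly, and iterating yields the exponential tail bound $\prob^{\gamma[h],u}(R_\tau\ge i\cdot 2\Val)\le 2^{-i}$. Without this uniform bound your reassembly step has nothing to sum against, and your own phrase ``which one would then like to reassemble'' marks exactly where the argument is incomplete.

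Your worry~(i) --- a predictable regeneration that the Attacker can time --- is also real, and the paper avoids it by not regenerating at all. Instead it first \emph{$d$-discretizes} $\gamma$ (rounding all transition probabilities to multiples of $1/d$ while preserving supports), so the infinite tree of histories carries only finitely many edge labels and hence only finitely many subtrees of height $\ell$ up to isomorphism; it then \emph{$\ell$-folds} by deleting, from any history containing two indices (multiples of $\ell$) rooting isomorphic depth-$\ell$ subtrees, the subpath between them. The resulting strategy has finitely many memory states (the histories without a folding pair), and --- crucially --- its behavior over any window of length $\ell$ coincides exactly with that of $\gamma_d$ started from some eligible history, so there is no ``seam'' the Attacker can exploit; the uniform tail bound then transfers to $\gamma_{d,\ell}$ and the $\varepsilon$ accounting goes through. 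In short: your plan identifies the right difficulties but resolves neither; the subgame-optimality lemma and the discretize-and-fold construction are the two ingredients you would need to supply.
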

\begin{proof}
By Theorem~3.1, there exists a Defender's strategy $\gamma$ such that $\Val(\gamma) = \Val$. 
We show that for every $\varepsilon > 0$, there exist sufficiently large $d,\ell \in \Nset$ such that $\Val(\sigma_{d,\ell}) \leq \Val(\gamma) + \varepsilon$, where $\sigma_{\delta,\ell}$ is a regular strategy obtained by \emph{$d$-discretization} and \emph{$\ell$-folding} of $\gamma$. 

A $d$-discretization of $\gamma$ is a strategy $\gamma_d$ where for all $h \in \calH$ and $v \in V$, the following conditions are satisfied:
\begin{itemize}
  \item $\gamma_d(h)(v) = k/d$ for some $k \in \{0,\ldots,d\}$;
  \item $\gamma_d(h)(v) = 0$ iff $\gamma(h)(v) = 0$;
  \item $|\gamma_d(h)(v) - \gamma(h)(v)| \leq |V|/d$.  
\end{itemize}
Observe that a $d$-discretization of $\gamma$ exists for every $d \geq |V|$. 

The regular strategy $\sigma_{d,\ell}$ is obtained by $\ell$-folding the strategy $\gamma_d$ constructed for a sufficiently large $d$. We can view $\gamma_d$ as an infinite tree $T$ where the nodes are histories and $h \tran{x} hu$ iff $\gamma(h)(u) = x$.
Furthermore, we label each node $h$ of $T$ with the last vertex of $h$.
Since the edge probabilities range over finitely many values, the tree $T$ contains only \emph{finitely many} subtrees of height~$\ell$ up to isomorphism preserving both node and edge labels. For a given history $h = v_0,\ldots,v_n$,  let $f_h,s_h \in \Nset$ be the lexicographically smallest pair of indexes such that $f_h < s_h$, both $f_h$ and $s_h$ are integer multiples of~$\ell$, and the subtrees of height~$\ell$ rooted by $v_{f_h}$ and $v_{s_h}$ are isomorphic. If no such $f_h,s_h$ exist, we say that $h$ does not contain a folding pair.

Note that there exists a constant $c_\ell$ \emph{independent of $h$} such that every history $h$ of length at least $c_\ell$ contains a folding pair $f_h,s_h$ with both components  bounded by $c_{\ell}$. We define a strategy $\gamma_{d,\ell}$ as follows:
\begin{itemize}
\item $\gamma_{d,\ell}(h) = \gamma_{d}(h)$ for every history $h$ without a folding pair.
\item $\gamma_{d,\ell}(h) = \gamma_{d,\ell}(h')$ for every history $h$ with a folding pair $f_h,s_h$, where the $h'$ is obtained from $h = v_0,\ldots,v_n$ by deleting the subpath $v_{f_h},\ldots,v_{s_h -1}$.
\end{itemize}
Note that $\gamma_{d,\ell}$ can be equivalently represented as a \emph{finite-memory} strategy $\sigma_{d,\ell}$ where the memory elements correspond to the (finitely many) histories without a folding pair. 

It remains to show that for every $\varepsilon > 0$ there are sufficiently large $d,\ell$ such that $\Val(\gamma_{d,\ell}) \leq \Val + \varepsilon$. 

We start by observing important structural properties of the optimal strategy $\gamma$. Let $v^* \in V$ such that $\sup_{\pi} \Exp^{\gamma,v^*}[\calD^{\pi}] = \Val$. A history $h$ is \emph{$\gamma$-eligible} if $\prob^{\gamma,v^*}(\walk(h)) > 0$. For every $\gamma$-eligible history $h$, let $\gamma[h]$ be a strategy such that $\gamma[h](h') = \gamma(h \odot h')$ for every $h' \in \calH$ initiated in the last vertex $u$ of $h$ (for the other $h'$, the strategy $\gamma[h]$ is defined arbitrarily). We show that 
\begin{equation}
  \sup_{\pi} \Exp^{\gamma[h],u}[\calD^{\pi}] \quad = \quad \Val
\label{eq:subgame-val}
\end{equation}
Clearly,
\begin{equation}
   \sup_{\pi} \Exp^{\gamma[h],u}[\calD^{\pi}] \quad \geq \quad \Val
\end{equation}
for every $\gamma$-eligible~$h$ because otherwise we have a contradiction with the definition of $\Val$. 
Now suppose $\sup_{\pi} \Exp^{\gamma[h],u}[\calD^{\pi}] > \Val$ for some $\gamma$-eligible~$h=v_1,\ldots,v_n$. We show that then there is an Attacker's strategy $\pi$ such that $\Exp^{\gamma,v^*}[\calD^{\pi}] > \Val$, which contradicts the optimality of~$\gamma$. Let $H$ be the set of all $\gamma$-eligible histories of the form $v_1,\ldots,v_k,t$ where $k < n$ and $t \neq v_{k+1}$. Furthermore, let $\kappa = \sup_{\pi} \Exp^{\gamma[h],u}[\calD^{\pi}] -\Val$, and let $\delta > 0$  be a constant satisfying
\begin{equation*}
   \delta \cdot \big(1-\prob^{\gamma,v^*}(\walk(h))\big) \quad \leq \quad    \frac{\prob^{\gamma,v^*}(\walk(h)) \cdot \kappa}{3}  
\end{equation*}
For every $h' \in H$, let $\pi_{h'}$ be an Attacker's strategy such that $\Exp^{\gamma[h'],u}[\calD^{\pi_{h'}}] \geq \Val - \delta$, and we also fix an Attacker's strategy $\pi_h$ such that $\Exp^{\gamma[h],v_n}[\calD^{\pi_{h'}}] \geq \Val + \frac{2}{3}\kappa$. The strategy $\pi$ is defined as follows. For every $\hbar \in H \cup \{h\}$ and every observation of the form $h'', u\to t$ such that $h''t$ is \mbox{$\gamma[\hbar]$-eligible} and $\pi_{\hbar}(h'', u{\to} t) = \attack_\tau$ for some $\tau \in T$, we put $\pi(\hbar \odot h'', u {\to} t) = \attack_\tau$. Thus, we obtain $\Exp^{\gamma,v^*}[\calD^{\pi}] \geq \Val + \kappa/3$.

For every target $\tau$, let $\pi[\tau]$ be the Attacker's strategy where \mbox{$\pi[\sigma](u,u\to v) = \attack_\tau$} for all $u,v \in V$, i.e., $\pi[\tau]$ attacks $\tau$ right after the Defender starts its walk. An immediate consequence of~\eqref{eq:subgame-val} is that for every $\gamma$-eligible history $h$ ending in a vertex $u$ and every target~$\tau$, we have that 
\begin{equation}
    \Exp^{\gamma[h],u}[\calD^{\pi[\tau]}] \quad \leq \quad \Val
\label{eq:immediate-val}
\end{equation}
Let $R_{\tau}$ be a function assigning to every walk $w = v_0,v_1,\ldots$ the least~$n$ such that $v_n = \tau$. If there is no such $n$, we put $R_{\tau}(w) = \infty$. Since $R_\tau \leq \calD^{\pi[\tau]}$, we have that
\begin{equation}
   \prob^{\gamma[h],u}(R_{\tau} \geq 2\Val) \quad \leq \quad \frac{1}{2}
\label{eq:Markov}
\end{equation} 
by~\eqref{eq:immediate-val} and Markov inequality.

Note that~\eqref{eq:Markov} holds for \emph{all} $\gamma$-eligible $h$ and $\tau$. Hence, we have that
\begin{equation}
   \prob^{\gamma[h],u}(R_{\tau} \geq i \cdot 2\Val) \quad \leq \quad \frac{1}{2^i}
\label{eq:R-exp-decay}
\end{equation} 
because $R_{\tau} \geq i \cdot 2\Val$ requires success of $i$ consecutive independent experiments where each experiment succeeds with probability bounded by $1/2$. 

Now we show that for every $\varepsilon > 0$, there exist sufficiently large $d,\ell$ such that $\Val(\gamma_{d,\ell}) \leq \Val(\gamma) + \varepsilon$, which proves our theorem. 

For the rest of this proof, we fix $\varepsilon >0$. Furthermore, we fix $k \in \Nset$ satisfying 
\begin{equation}
  \sum_{i=2}^\infty \frac{1}{2^{k(i-1)}} \cdot i \cdot k \cdot 2\Val \cdot \tm_{\max} \cdot \alpha_{\max} \quad \leq \quad \varepsilon/4
\label{eq:k-def}
\end{equation}
Here $\tm_{\max} = \max\{\tm(e) \mid e \in E\}$ and $\alpha_{\max} =\max\{\alpha(\tau) \mid \tau \in T\}$.
Note that such a $k$ exists because the above sum converges to $0$ as $k \to \infty$. For every $i \in \Nset$, let $\calI_i$ be the set of all integers $j$ satisfying  
\[
   (i-1) \cdot k\cdot 2\Val \quad \leq \quad j \quad < \quad  i \cdot k\cdot 2\Val\,.
\]
We have that 
\begin{equation}
   \Exp^{\gamma[h],u}[\calD^{\pi[\tau]}] =
   \sum_{i=1}^\infty  \Exp^{\gamma[h],u}[\calD^{\pi[\tau]} \mid R_\tau \in \calI_i] \cdot
                      \prob^{\gamma[h],u}[R_\tau \in \calI_i] 
\label{eq:exp-conditional}  
\end{equation}
Observe 
\begin{eqnarray}
   \Exp^{\gamma[h],u}[\calD^{\pi[\tau]} \mid R_\tau \in \calI_i] & \leq & i \cdot k \cdot 2\Val \cdot \tm_{\max} \cdot \alpha_{\max} \label{eq:bound1}\\
   \prob^{\gamma[h],u}[R_\tau \in \calI_i] & \leq & \frac{1}{2^{k(i-1)}} \label{eq:bound2}
\end{eqnarray}
Inequality~\eqref{eq:bound1} is trivial, and~\eqref{eq:bound2} follows from~\eqref{eq:R-exp-decay}. Using~\eqref{eq:k-def}, we obtain
\begin{eqnarray}
   && \sum_{i=2}^\infty  \Exp^{\gamma[h],u}[\calD^{\pi[\tau]} \mid R_\tau \in \calI_i] \cdot \prob^{\gamma[h],u}[R_\tau \in \calI_i]\nonumber \\
   & \leq & \sum_{i=2}^\infty  i \cdot k \cdot 2\Val \cdot \tm_{\max} \cdot \alpha_{\max} \cdot \frac{1}{2^{k(i-1)}} \quad \leq \quad \varepsilon/4
   \label{eq:limit}
\end{eqnarray}

Consider a strategy $\gamma_d$ where $d \geq |V|$. Then every $\gamma$-eligible history is \mbox{$\gamma_d$-eligible}, and vice-versa. Furthermore, for every $\delta > 0$, there exists a sufficiently large $d$ such that, for all $h$, $\tau$, and $i$, 
\begin{eqnarray*}
  \Exp^{\gamma_d[h],u}[\calD^{\pi[\tau]} \mid R_\tau \in \calI_i] & \leq & 
         \Exp^{\gamma[h],u}[\calD^{\pi[\tau]} \mid R_\tau \in \calI_i] + \delta\\
  \prob^{\gamma_d[h],u}[R_\tau \in \calI_i] & \leq & (1/2 + \delta)^{k \cdot (i-1)}
\end{eqnarray*}
Consequently, we can fix a sufficiently large $d$ such that the values of
\begin{eqnarray*}
  & \Exp^{\gamma[h],u}[\calD^{\pi[\tau]} \mid R_\tau \in \calI_1] \cdot \prob^{\gamma[h],u}[R_\tau \in \calI_1],\\ 
  & \sum_{i=2}^\infty  \Exp^{\gamma[h],u}[\calD^{\pi[\tau]} \mid R_\tau \in \calI_i] \cdot \prob^{\gamma[h],u}[R_\tau \in \calI_i]
\end{eqnarray*}
increase at most by $\varepsilon/4$ when $\gamma$ is replaced with $\gamma_d$. Then,
\begin{eqnarray}
  \Exp^{\gamma_d[h],u}[\calD^{\pi[\tau]}]\cdot \prob^{\gamma_d[h],u}[R_\tau \in \calI_1] & \leq & \Val + \frac{\varepsilon}{4}\nonumber\\
  \sum_{i=2}^\infty  \Exp^{\gamma_d[h],u}[\calD^{\pi[\tau]} \mid R_\tau \in \calI_i] \cdot \prob^{\gamma_d[h],u}[R_\tau \in \calI_i] & \leq & \frac{\varepsilon}{2}
\label{eq:disc-bounds}
\end{eqnarray}

Now consider a strategy $\gamma_{d,\ell}$ where $\ell = k \cdot 2\Val$, and let $h$ be a \mbox{$\gamma_{d,\ell}$-eligible} history. Then $\Exp^{\gamma_{d,\ell}[h],u}[\calD^{\pi[\tau]}]$ is equal to
\begin{equation*}
     \sum_{i=1}^\infty  \Exp^{\gamma_{d,\ell}[h],u}[\calD^{\pi[\tau]} \mid R_\tau \in \calI_i] \cdot \prob^{\gamma_{d,\ell}[h],u}[R_\tau \in \calI_i]
\end{equation*}
By definition of $\ell$-folding, for every $i \geq 1$ there is a $\gamma_d$-eligible history $h'$ without a folding pair such that 
\begin{equation}
  \Exp^{\gamma_{d,\ell}[h],u}[\calD^{\pi[\tau]} \mid R_\tau \in \calI_i] \ = \ 
  \Exp^{\gamma_{d}[h'],u}[\calD^{\pi[\tau]} \mid R_\tau \in \calI_1]
\label{eq:reg1}
\end{equation}
Furthermore,
\begin{equation}
   \prob^{\gamma_{d,\ell}[h],u}[R_\tau \in \calI_i] \ \leq \ (1-\xi)^{k(i-1)}
\label{eq:reg2}
\end{equation}
where $\xi$ is the maximal $\prob^{\gamma_d[h'],u}[R_\tau \in \calI_1]$ such that $h'$ is a \mbox{$\gamma_d$-eligible} history $h'$ without a folding pair . 

Due to~\eqref{eq:reg1} and~\eqref{eq:reg2}, the bounds of~\eqref{eq:disc-bounds} remain valid when $\gamma_d$ is replaced with $\gamma_{d,\ell}$ and $h$ is a \mbox{$\gamma_{d,\ell}$-eligible} history. Hence, 
\begin{equation}
 \Exp^{\gamma_{d,\ell}[h],u}[\calD^{\pi[\tau]}] \leq \Val + \varepsilon
\label{eq:reg3}
\end{equation}
Since~\eqref{eq:reg3} holds for every \mbox{$\gamma_{d,\ell}$-eligible} $h$ and every target $\tau$, we obtain $\Val(\gamma_{d,\ell}) \leq \Val + \varepsilon$.
\end{proof}

\section{Strategy Synthesis Algorithm}
\label{sec-synthesis}

In this section, we design an efficient algorithm synthesizing a regular Defender's strategy for a given patrolling graph capable of balancing the trade-off between maximizing $\Val(\sigma)$ and minimizing the tail probabilities $P_c(\Val(\sigma))$.

\subsection{Computing $\Val(\sigma)$}
\label{sec-eval}
First, we show how to compute $\Val(\sigma)$ for a given regular strategy $\sigma$.
Let $G$ be a patrolling graph and $\sigma$ a regular strategy for~$G$.
Let $\dhat{E}$ be the set of all $(\dhat{u},\dhat{v}) \in \dhat{V}\times\dhat{V}$ such that $\sigma(\dhat{u})(\dhat{v})>0$,
\ie, $\dhat{E}$ is the set of \emph{augmented edges} used by $\sigma$.
For every target $\tau$, let $\pi[\tau]$ be the Attacker strategy where for all $u,v \in V$ we have that \mbox{$\pi[\tau](u,u\to v) = \attack_\tau$} , i.e., $\pi[\tau]$ attacks $\tau$ immediately after the Defender starts its walk. 

For every $\dhat{e}=(\dhat{u},\dhat{v})\in\dhat{E}$ and $\tau\in T$, let $\calL_{\tau,\dhat{e}}$ be the expected damage caused by an attack at $\tau$ scheduled right after the Defender starts traversing $\dhat{e}$, i.e.,
\begin{equation}
   \calL_{\tau,\dhat{e}}  \quad = \quad \Exp^{\sigma,\dhat{u}}
   \big[\calD^{\pi[\tau]}\mid walk(\dhat{e})\big]\,.
\end{equation}
Hence, $\calL_{\tau,\dhat{e}}$ is the conditional expected value of $\calD^{\pi[\tau]}$ under the condition that the Defender's walk starts by traversing $\dhat{e}$.

Consider the directed graph $\dhat{G}=(\dhat{V},\dhat{E})$, and let $\calB$ denote the set of all \emph{bottom} strongly connected components of $\dhat{G}$. Let

\begin{equation} \label{E:Ldef}
    \calL(\sigma) \quad = \quad \min_{B \in \calB} \ \max_{\tau \in T} \ \max_{\dhat{e} \in E(B)} \ 
    \calL_{\tau,\dhat{e}}
\end{equation}
where $E(B)=\dhat{E}\cap(B\times B)$ is the set of augmented edges in the component $B$ used by $\sigma$.
We have the following:

\begin{theorem}
\label{thm-bound}
   Let $\sigma$ be a regular strategy for a patrolling graph~$G$.
   Then $\Val(\sigma) \leq   \calL(\sigma)$. If $\sigma$ is unambiguous, then $\Val(\sigma) =   \calL(\sigma)$.
\end{theorem}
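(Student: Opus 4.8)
The strategy is to relate the protection value $\Val(\sigma)$ of a regular strategy to the behaviour of the induced finite Markov chain $\dhat{G}=(\dhat{V},\dhat{E})$, and to exploit the fact that the Attacker's best response against any regular strategy can be taken to be of the form $\pi[\tau]$ (attack $\tau$ immediately). I would break the argument into two directions corresponding to the two claims.

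\emph{The inequality $\Val(\sigma)\le\calL(\sigma)$.} Recall $\Val(\sigma)=\min_{\dhat v}\sup_\pi\Exp^{\sigma,\dhat v}[\calD^\pi]$. I would first argue that it suffices to bound $\sup_\pi\Exp^{\sigma,\dhat v}[\calD^\pi]$ for an initial augmented vertex $\dhat v$ lying in a well-chosen bottom SCC $B$. Given any $B\in\calB$, pick $\dhat v\in B$; since $B$ is bottom, every walk from $\dhat v$ stays in $B$, so any attack the Attacker can schedule is launched along some augmented edge $\dhat e\in E(B)$. The key observation is that once the Defender has committed to edge $\dhat e$ and the Attacker has committed to attacking some $\tau$, the expected detection time is exactly $\calL_{\tau,\dhat e}$ regardless of what the Attacker ``intended'' to do on other histories; hence conditioning on the first attack edge gives $\Exp^{\sigma,\dhat v}[\calD^\pi]\le\max_{\tau}\max_{\dhat e\in E(B)}\calL_{\tau,\dhat e}$ for every $\pi$. (One should also check that waiting forever yields damage $0$, so the Attacker never benefits from not attacking inside a bottom SCC as long as some $\calL_{\tau,\dhat e}$ is finite; if all of them are finite this is immediate, and if $\calL(\sigma)=\infty$ the bound is vacuous.) Taking the supremum over $\pi$ and then the minimum over $B\in\calB$ of the chosen initial vertex yields $\Val(\sigma)\le\calL(\sigma)$.

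\emph{The equality for unambiguous $\sigma$.} Here I would prove the reverse inequality $\Val(\sigma)\ge\calL(\sigma)$. Fix an arbitrary initial augmented vertex $\dhat v$. Because the chain is finite, with probability one the walk eventually enters some bottom SCC $B$ and from then on visits every augmented edge of $E(B)$ infinitely often. Unambiguity is what lets the Attacker translate observed vertex-histories into exact knowledge of the current augmented vertex, hence of the current memory state and of the distribution $\sigma$ will use next; so for the specific edge $\dhat e=(\dhat u,\dhat v')\in E(B)$ and target $\tau$ achieving $\max_\tau\max_{\dhat e\in E(B)}\calL_{\tau,\dhat e}$, the Attacker can \emph{wait} until the Defender is about to traverse exactly that augmented edge $\dhat e$, then play $\attack_\tau$. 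Since $\dhat e$ is used infinitely often along almost every walk that reaches $B$, this attack opportunity occurs almost surely (conditioned on reaching $B$), and the resulting expected damage, conditioned on that moment, is $\calL_{\tau,\dhat e}$ by definition of $\calL_{\tau,\dhat e}$ and the strong Markov / memorylessness property of regular strategies (the continuation after traversing $\dhat e$ depends only on $\dhat v'$). One must be slightly careful: the Attacker must commit to a \emph{single} bottom SCC it is ``aiming at'' — but since from $\dhat v$ the walk reaches exactly one bottom SCC on each run and the Attacker observes enough to know which, the Attacker's strategy can be ``if the history so far is consistent with heading to $B$ and we are about to traverse the critical edge of $B$, attack''; taking $B$ to be the one that \emph{minimises} $\max_\tau\max_{\dhat e}\calL_{\tau,\dhat e}$ would be the wrong choice for the Attacker, so instead the Attacker fixes the bottom SCC that the run actually enters and attacks its critical edge, giving $\sup_\pi\Exp^{\sigma,\dhat v}[\calD^\pi]\ge\min_{B}\max_\tau\max_{\dhat e\in E(B)}\calL_{\tau,\dhat e}=\calL(\sigma)$ for every $\dhat v$, hence $\Val(\sigma)\ge\calL(\sigma)$.

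\textbf{Main obstacle.} The routine parts are the reduction to $\pi[\tau]$-type attacks and the finiteness/recurrence facts about the induced chain. The delicate point is the lower bound for unambiguous strategies: making precise that unambiguity gives the Attacker enough information to pinpoint the current augmented vertex (so that ``the Defender is about to traverse augmented edge $\dhat e$'' is an observable event), and that conditioning the expectation of $\calD^{\pi}$ on the first occurrence of this event reproduces exactly $\calL_{\tau,\dhat e}$ — this needs the strong Markov property for regular strategies and a careful treatment of the event ``$\dhat e$ is eventually traversed'' having probability one within the reached bottom SCC. Handling the possibility $\calL(\sigma)=\infty$ (some target is never revisited within a bottom SCC) is a minor extra case that should be dispatched first.
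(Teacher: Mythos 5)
Your proposal is correct and follows essentially the same route as the paper's proof: the upper bound via decomposing $\Exp^{\sigma,\dhat v}[\calD^{\pi}]$ over (augmented) attack observations inside the minimizing bottom SCC and using regularity to identify each conditional expectation with $\calL_{\tau,\dhat e}$, and the lower bound via an Attacker who, using unambiguity to track the memory state, attacks at the critical edge of whichever bottom SCC the walk actually enters, which is traversed almost surely. The only cosmetic difference is that the paper phrases the upper bound as a proof by contradiction and formalizes your ``critical edge per bottom SCC'' idea with an explicit injective map $\psi\colon\calB\to\dhat E$.
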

\begin{proof}
For purposes of this proof, we need to introduce several notions. An \emph{augmented walk} is an infinite sequence $\dhat{v}_1,\dhat{v}_2,\dots$ such that $v_i\to v_{i+1}$ for all $i$.
The set of all augmented walks is denoted $\dhat{\calW}$. An \emph{augmented history} is a non-empty finite prefix of an augmented walk. The set of all augmented histories is denoted by $\dhat{\calH}$,
and for $\dhat{h}\in\dhat{\calH}$, $\walk(\dhat{h})$ denotes the set of all
augmented walks starting with $\dhat{h}$.

Furthermore, for every regular strategy $\sigma$ and every initial $\dhat{v}\in\dhat{V}$, we define the probability space $(\dhat\calW,\dhat\calF,\dhat{\prob}^{\sigma,\dhat{v}})$ over the augmented walks in the expected way. 

For an arbitrary walk $w=v_1,v_2,\dots$, let $Aug(w)$ denote the set of all augmented walks
of the form $\dhat{v}_1,\dhat{v}_2,\dots$.
Observe that for every measurable event $F\in\calF$, writing $\dhat{F}=\bigcup_{w\in F}Aug(w)$,
we have $\dhat{F}\in\dhat\calF$ and $\dhat{\prob}^{\sigma,\dhat{v}}(\dhat{F})=\prob^{\sigma,\dhat{v}}(F)$. Hence, to simplify our notation, we write $\prob$ instead of $\dhat\prob$.
The extension of the random variable $\calD^\pi$ to augmented walks is straightforward.

Let $\sigma$ be a regular Defender's strategy. We prove $\Val(\sigma) \leq   \calL(\sigma)$. Recall 

\begin{eqnarray*}
   \Val(\sigma) & = & \newmin_{\dhat{v}\in \dhat{V}} \sup_{\pi}\ \Exp^{\sigma,\dhat{v}}[\calD^{\pi}]\\
   \calL(\sigma) & = & \min_{B \in \calB} \ \max_{\tau \in T} \ \max_{\dhat{e} \in E(B)} \ 
    \calL_{\tau,\dhat{e}}
\end{eqnarray*}

Let $B^*\in\calB$ be a bottom strongly connected component achieving the above minimum.
We show that
\begin{equation*}
   \newmin_{\dhat{v}\in \dhat{V}} \sup_{\pi}\ \Exp^{\sigma,\dhat{v}}[\calD^{\pi}]
   \leq \max_{\tau \in T} \ \max_{\dhat{e} \in E(B^*)} \ \calL_{\tau,\dhat{e}}.
\end{equation*}
Denoting the right-hand side by $M$, we show that there is $\dhat{v}\in\dhat{V}$ such that
$\sup_{\pi}\ \Exp^{\sigma,\dhat{v}}[\calD^{\pi}]\leq M$.
Choose an arbitrary $\dhat{v}\in B^*$ and assume, for the sake of contradiction, that
$\sup_{\pi}\ \Exp^{\sigma,\dhat{v}}[\calD^{\pi}]> M$.
Hence, there is an Attacker's strategy $\pi$ such that
$\Exp^{\sigma,\dhat{v}}[\calD^{\pi}]> M$.

Now we decompose the expectation $\Exp^{\sigma,\dhat{v}}[\calD^{\pi}]$ according to 
\emph{augmented observations}, i.e., sequences 
$\dhat{o}=\dhat{v}_1,\ldots, \dhat{v}_n, \dhat{v}_n {\rightarrow} \dhat{v}_{n+1}$,
where $v_1,\ldots, v_{n+1}$ is a path in~$G$.
Given $\dhat{o}=\dhat{v}_1,\ldots, \dhat{v}_n, \dhat{v}_n {\rightarrow} \dhat{v}_{n+1}$,
we use $o$ to denote the ``unaugmented'' observation $v_1,\ldots, v_n, v_n {\rightarrow} v_{n+1}$.
Let $\dhat{\Omega}$ be the set of all augmented observation, and
let $\Attack(\dhat{\Omega}) {\subseteq} \dhat{\Omega}$ be the set of all $\dhat{o}$ where
$\pi(o)\neq\wait$. For every \mbox{$\dhat{o} = \dhat{v}_1,\ldots,\dhat{v}_n,\dhat{v}_n {\to} \dhat{v}_{n+1}$} in $\Attack(\dhat{\Omega})$, let $\walk(\dhat{o})$ be the set of all augmented walks starting with $\dhat{v}_1,\ldots,\dhat{v}_{n+1}$. Note that if $\dhat{o},\dhat{o}' \in \Attack(\dhat{\Omega})$ and $\dhat{o} \neq \dhat{o}'$, then $\walk(\dhat{o}) \cap \walk(\dhat{o}') = \emptyset$. Furthermore, for every $\dhat{w} \in \dhat{\calW} \setminus \bigcup_{\dhat{o} \in \Attack(\dhat{\Omega})} \walk(\dhat{o})$, we have that $\calD^{\pi}(\dhat{w}) = 0$. Hence, we obtain
\begin{equation*}
    \Exp^{\sigma,\dhat{v}}[\calD^{\pi}]=\sum_{\dhat{o} \in \Attack(\dhat{\Omega})} \prob^{\sigma,\dhat{v}}(\walk(\dhat{o})) \cdot \Exp^{\sigma,\dhat{v}}[\calD^{\pi} \mid \walk(\dhat{o})].
\end{equation*}

Since
\begin{equation*}
    \sum_{\dhat{o} \in \Attack(\dhat{\Omega})} \prob^{\sigma,\dhat{v}}(\walk(\dhat{o}))\leq 1,
\end{equation*}
there is $\dhat{o}=\dhat{v}_1,\ldots,\dhat{v}_n,\dhat{v}_n{\to}\dhat{v}_{n+1} \in \Attack(\dhat{\Omega})$
such that
\begin{equation*}
    \prob^{\sigma,\dhat{v}}(\walk(\dhat{o}))>0
    \quad\hbox{and}\quad\Exp^{\sigma,\dhat{v}}[\calD^{\pi} \mid \walk(\dhat{o})]>M.
\end{equation*}
Let $\dhat{e}=(\dhat{v}_n,\dhat{v}_{n+1})$, and
let $\tau\in T$ be the attacked target (\ie, $\pi(o)=\attack_\tau$).
Since $\dhat{v}=\dhat{v}_1\in B^*$ and
$\prob^{\sigma,\dhat{v}}(\walk(\dhat{o}))>0$, it holds that $\dhat{e}\in E(B^*)$.
Since $\sigma$ is regular, its (randomized) behavior after $\dhat{e}$ is always the same,
regardless of the traversed history. Therefore,
\begin{equation*}
 \Exp^{\sigma,\dhat{v}}[\calD^{\pi} \mid \walk(\dhat{o})]=
  \Exp^{\sigma,\dhat{v}_{n}}[\calD^{\pi[\tau]} \mid \walk(\dhat{e})]=\calL_{\tau,\dhat{e}}\,.
\end{equation*}
Hence, $\calL_{\tau,\dhat{e}} > M$, which contradicts the definition of $M$.

Now assume that $\sigma$ is unambiguous.  We prove that $\Val(\sigma) \geq \calL(\sigma)$.
So, assume that  $\dhat{v}\in \dhat{V}$ achieves the minimum in the definition
of $\Val(\sigma)$. We construct an Attacker's strategy $\pi$ such that
$\Exp^{\sigma,\dhat{v}}[\calD^{\pi}] \geq \calL(\sigma)$.
For this purpose, let $\varphi\colon\calB\rightarrow T$ and $\psi\colon\calB\rightarrow\dhat{E}$
be such that for every $B\in\calB$, $\psi(B)\in E(B)$ and the choice $\tau=\varphi(B),\dhat{e}=\psi(B)$
achieves the maximal $\calL_{\tau,\dhat{e}}$ (see the definition of $\calL(\sigma)$). In particular,
for every $B\in\calB$, we have
\begin{equation} \label{eq:thm41b1}
    \calL_{\varphi(B),\psi(B)}\geq \calL(\sigma).
\end{equation}
Note that the condition $\psi(B)\in E(B)$ for every $B\in\calB$ implies
that $\psi$ is injective. Moreover, let $\calE$ denote the range of $\psi$
(thus, $\psi\colon\calB\rightarrow\calE$ is a bijection) and for every $\dhat{e}\in\calE$,
let $\tau(\dhat{e})$ denote the target $\varphi(\psi^{-1}(\dhat{e}))$.
Using this notation, \eqref{eq:thm41b1} becomes
\begin{equation} \label{eq:thm41b2}
    \calL_{\tau(\dhat{e}),\dhat{e}}\geq \calL(\sigma)
\end{equation}
for every $\dhat{e}\in\calE$.

For every observation $o\in\Omega$ s.t.{} $\prob^{\sigma,\dhat{v}}(walk(o))>0$,
there is exactly one augmented observation of the form $\dhat{o}$ such that
$\prob^{\sigma,\dhat{v}}(walk(\dhat{o}))>0$ (this follows by a trivial induction
on the length of $o$).
In the rest of this proof, for every $o\in\Omega$, the symbol
$\dhat{o}$ denotes the unique augmented observation satisfying the above.
Now, for each $o\in\Omega$, we define $\pi(o)$ as follows:
Let $\dhat{o}=\dhat{v}_1,\ldots, \dhat{v}_n, \dhat{v}_n {\rightarrow} \dhat{v}_{n+1}$.
If none of the augmented edges $(\dhat{v}_i,\dhat{v}_{i+1})$ for $1\leq i<n$ is in $\calE$
and the last augmented edge $\dhat{e}=(\dhat{v}_n,\dhat{v}_{n+1})$ \emph{does} appear in $\calE$,
we put $\pi(o)=\attack_{\tau(\dhat{e})}$.
Otherwise, we put $\pi(o)=\wait$.
For every $\dhat{e}\in\calE$, let $\Attack(\dhat{e})$ denote the union of $\walk(\dhat{o})$
over all $o\in\Omega$ such that $\pi(o)\neq\wait$ and $\dhat{o}$ ends with $\dhat{e}$.
Since the Attacker may attack only once along any (augmented) walk,
we have $\Attack(\dhat{e}) \cap \Attack(\dhat{e}') = \emptyset$ for all
$\dhat{e},\dhat{e}'\in\calE$ such that $\dhat{e} \neq \dhat{e}'$.
Since $\calD^\pi$ is non-negative, we obtain
\begin{equation} \label{eq:thm41b3}
    \Exp^{\sigma,\dhat{v}}[\calD^{\pi}] \geq \sum_{\hat{e} \in \calE}
    \prob^{\sigma,\dhat{v}}(\Attack(\dhat{e}))
        \cdot \Exp^{\sigma,\dhat{v}}[\calD^{\pi} \mid \Attack(\dhat{e})].
\end{equation}
Since $\sigma$ is regular and, for every $\dhat{e}\in\calE$,
$\pi$ always attacks the same target $\tau(\dhat{e})$ when the Defender starts traversing $\dhat{e}$
(regardless of the previous history), we have
\begin{equation*}
    \Exp^{\sigma,\dhat{v}}[\calD^{\pi} \mid \Attack(\dhat{e})] = \calL_{\tau(\dhat{e}),\dhat{e}}
    \geq \calL(\sigma),
\end{equation*}
where the inequality follows from \eqref{eq:thm41b2}.
Substituting into \eqref{eq:thm41b3} yields
\begin{equation*}
    \Exp^{\sigma,\dhat{v}}[\calD^{\pi}] \geq \calL(\sigma) \cdot \sum_{\hat{e} \in \calE}
    \prob^{\sigma,\dhat{v}}(\Attack(\dhat{e})).
\end{equation*}
The desired inequality $\Exp^{\sigma,\dhat{v}}[\calD^{\pi}] \geq \calL(\sigma)$ follows from
the fact that the above sum is equal to $1$ (this follows by applying basic results of finite Markov chain theory; the Defender almost surely visits some bottom strongly connected component $B\in\calB$
and there it almost surely traverses every edge infinitely often. In particular, the Defender almost surely visits the unique edge $\dhat{e}\in \calE\cap E(B)$).
\end{proof}

Our proof of Theorem~\ref{thm-bound} reveals that the Attacker can cause the expected damage equal to $\calL(\sigma)$ if it can observe the memory updates performed by the Defender.
If $\sigma$ is unambiguous, then the Attacker can determine the memory updates just by observing the history of Defender's moves.
However, if the memory updates are randomized, the Attacker needs to access the Defender's internal data structures during a patrolling walk.
Depending on a setup, this may or may not be possible.
By the worst-case paradigm of adversarial patrolling, $\calL(\sigma)$ is more appropriate than $\Val(\sigma)$ for measuring the protection achieved by~$\sigma$.
As we shall see, our strategy synthesis algorithm typically outputs unambiguous regular strategies where $\calL(\sigma) = \Val(\sigma)$.
Hence, it does not really matter whether $\calL(\sigma)$ is understood as the protection achieved by $\sigma$ or just a bound on this protection.

Fix $B \in \calB$ and $\tau \in T$. If $B$ does \emph{not} contain any augmented vertex of the form $\dhat{\tau}$, then $\calL_{\tau,\dhat{e}} = \infty$ for all $\dhat{e} \in E(B)$.
Otherwise, to every $\dhat{e}=(\dhat{u},\dhat{v}) \in E(B)$, we associate a variable $X_{\dhat{e}}$, and create a system of linear equations
\begin{equation} \label{E:system1}
    X_{\dhat{e}} \; = \; 
		\tm(u,v) + 
        \begin{cases}
           0      & \mbox{if $v = \tau$,}\\
           \sum_{\dhat{v} \to \dhat{w}} \sigma(\dhat{v})(\dhat{w}) \cdot X_{(\dhat{v},\dhat{w})} &
                    \mbox{otherwise}
        \end{cases}
\end{equation}
over $\dhat e\in E(B)$.
By a straightforward generalization of \citep[Theorem~1.3.5]{Norris:Book},
system~\eqref{E:system1} has a unique solution, equal to $(\calL_{\tau,\dhat{e}})_{\dhat{e}\in E(B)}$.

Observe that for all augmented edges $\dhat{e},\dhat{g}\in E(B)$ leading to the same augmented vertex
(say $\dhat{e}=(\dhat{u},\dhat{v})$ and $\dhat{g}=(\dhat{w},\dhat{v})$), we have
$\calL_{\tau,\dhat{g}}=\calL_{\tau,\dhat{e}}-\tm(u,v)+\tm(w,v)$.
Therefore, we may reduce the number of variables as well as equations of the system from $|E(B)|$ to $|B|$.
Indeed, to every $\dhat{v} \in B$, we assign a variable $Y_{\dhat{v}}$, and construct a system of linear equations
\begin{equation} \label{E:system2}
    Y_{\dhat{v}} \; = \;
        \begin{cases}
           0      & \mbox{if $v = \tau$,}\\
           \sum_{\dhat{v} \to \dhat{w}} \sigma(\dhat{v})(\dhat{w}) \cdot (\tm(v,w)+Y_{\dhat{w}}) &
                    \mbox{otherwise.}
        \end{cases}
\end{equation}
Then, for every $\dhat{e}=(\dhat{u},\dhat{v}) \in E(B)$,
we have $\calL_{\tau,\dhat{e}}=\tm(u,v)+y_{\dhat{v}}$, where $(y_{\dhat{v}})_{\dhat{v}\in B}$ is the unique solution of the system~\eqref{E:system2}.

\subsection{Optimization Scheme}
\label{sec-optimization}

Our strategy synthesis algorithm is based on interpreting $\calL$ as a piecewise differentiable function and applying methods of differentiable programming. 
We start from a random strategy $\sigma$, repeatedly compute $\calL(\sigma)$ and update the strategy against the direction of its gradient.

The optimization algorithm is described in Algo.~\ref{alg:optim}.
On forward pass, strategies are produced from real-valued coefficients by a \emph{Softmax} function that outputs probability distributions.
For every target $\tau$, we solve the system~\eqref{E:system2} to obtain a damage $(\calL_{\tau,\dhat e})_{\tau\in T,\dhat e\in E(B)}$.
Then, instead of hard maximum in equation~\eqref{E:Ldef} we optimize a \emph{loss} function defined by
\begin{equation} \label{E:loss}
	\text{loss} \; =\; \sum\nolimits_{\tau,\dhat e} \Phi_\varepsilon (\calL_{\tau,\dhat e})^2,
\end{equation}
where $\Phi_\varepsilon(t)=0$ for $t\in[0,m-\varepsilon m)$ and
$\Phi_\varepsilon(t)=1+(t-m)/\varepsilon m$ for $t\in[m-\varepsilon m,m]$,
in which $m=\overline{\calL}$ is the hard maximum, the bar denotes the stop-gradient operator, and $\varepsilon\in(0,1)$ is a hyperparameter.
Minimizing \emph{loss} instead of $\calL$ leads to a more efficient gradient propagation.
On top of this, we enforce the model to prefer deterministic strategies over randomized by adding an average entropy of strategies' probability distributions with a factor $\beta$.

On backward pass, the loss gradient is computed using the automatic differentiation,
we add decaying Gaussian noise and update the coefficients using Adam optimizer \citep{Adam}.

As \emph{Softmax} never produces probability distributions containing zeros,
we cut the outputs at a certain threshold (called \emph{rounding threshold}) to allow endpoint values on evaluation. Note that edges with zero probabilities are excluded from $E(B)$ which is crucial for equation~\eqref{E:system2}.

\begin{algorithm}
\caption{Strategy optimization}
\label{alg:optim}
\begin{algorithmic}
\State coefficients $\gets$ \textbf{Init}()
\For{step $\in$ steps} 
	\LineComment{Forward pass}
	\State strategy $\gets$ \textbf{Softmax}(coefficients)
	\LineComment[2]{solving linear system~\eqref{E:system2}}
	\State damage $\gets$ \textbf{Solve}(strategy)
	\State loss $\gets$ \textbf{Loss}(damage) $+$ $\beta\cdot$\textbf{Entropy}(strategy)
	\LineComment{Backward pass}
	\State coefficients.grad $\gets$ \textbf{Gradient}(loss)
	\LineComment[2]{automatic differentiation}
	\State coefficients.grad += \textbf{Noise}(step)
	\LineComment[2]{Adam optimizer's step}
	\State coefficients += \textbf{Step}(coefficients.grad, step)
	\LineComment{Strategy evaluation}
	\State strategy $\gets$ \textbf{Cutoff}(\textbf{Softmax}(coefficients))
	\State damage $\gets$ \textbf{Solve}(strategy)
	\State $\calL\gets$ \textbf{Max}(damage)
	\State \textbf{Save} $\calL$, strategy
\EndFor
\Return strategy with the smallest $\calL$
\end{algorithmic}
\end{algorithm}

\section{Experiments}
\label{sec-experiments}

We experimentally evaluate strategy synthesis algorithm on series of synthetic graphs with increasing sizes.
We perform two sets of tests. The first analyzes runtimes while the second one focuses on the achieved protection values. 

The experiments were performed on a desktop machine with Ubuntu 20.04 LTS running on
Intel\textsuperscript{\textregistered} Core\texttrademark{} i7-8700 Processor (6 cores, 12 threads) with 32GB RAM.

\subsection{Runtime Analysis}\label{sec-exp-grids}

We generate synthetic graphs with $n=10,20,\ldots,100$ vertices.
To obtain random but similarly structured graphs, we start with a grid of size $n \times n$ and choose its $n$ nodes as the vertices of our patrolling graph, half of them being targets.
All vertices are equipped with 6 memory elements.
The travel time between vertices is set to the number of edges on the shortest path in the original grid.
In the final patrolling graph, we omit those edges that have an alternative connection of at most the same length visiting another vertex.

For each $n$, we generate 10 graphs of $n$ vertices and run 10 optimization trials with 100 steps for each graph.
In Fig.~\ref{fig:grid_times}, we report statistics of average step-times in seconds aggregated by $n$.
Note than even considerably large graphs are processed in units of seconds, which confirms the applicability of our algorithm to dynamically changing environments (see Section~\ref{sec-intro}).

Recall that one optimization step consists of a forward pass (damage and loss of the current strategy), backward pass (gradient), and one more test evaluation.
For hyperparameters, we set $\varepsilon=0.3$, $\beta=0.2$, learning rate $=0.5$, cutoff threshold $=0.1$, and rounding threshold $=0.001$ (for a deeper explanation, see Section~\ref{sec-optimization}).

\begin{figure}[htb]
			\includegraphics[width=\columnwidth]{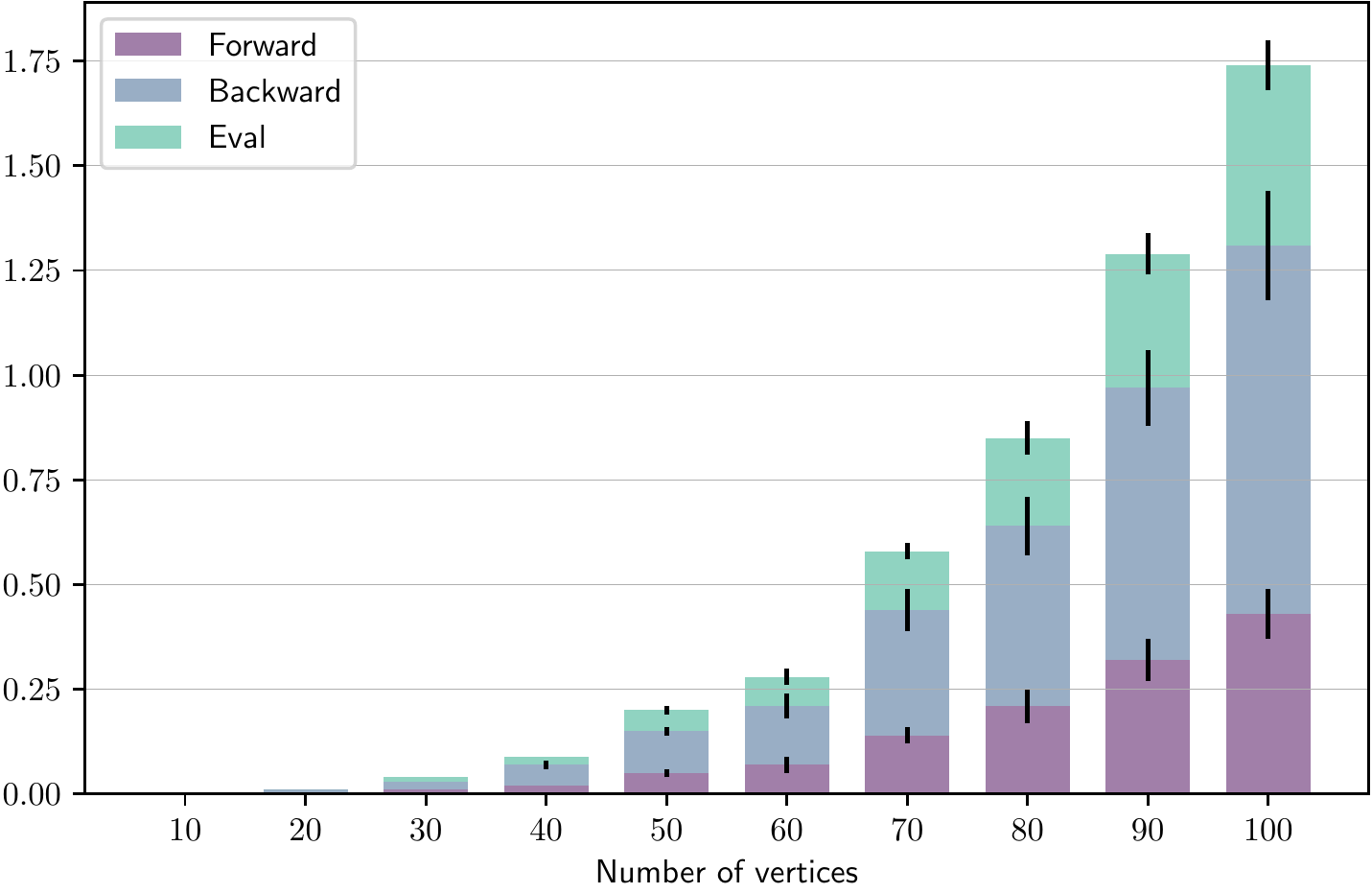}
			\caption{Runtime analysis of the strategy synthesis \mbox{algorithm} (in seconds).}
			\label{fig:grid_times}
\end{figure}

Fig.~\ref{fig:grid_convergence} shows the convergence of values during the optimization process.
Here we fix one graph for each number of vertices and run 10 trials each with 120 steps.
The colors are assigned to individual graphs, the areas show ranges of the obtained values during the optimization process.
The solid lines highlight minimal values.

\begin{figure}[htb]
        \includegraphics[width=\columnwidth]{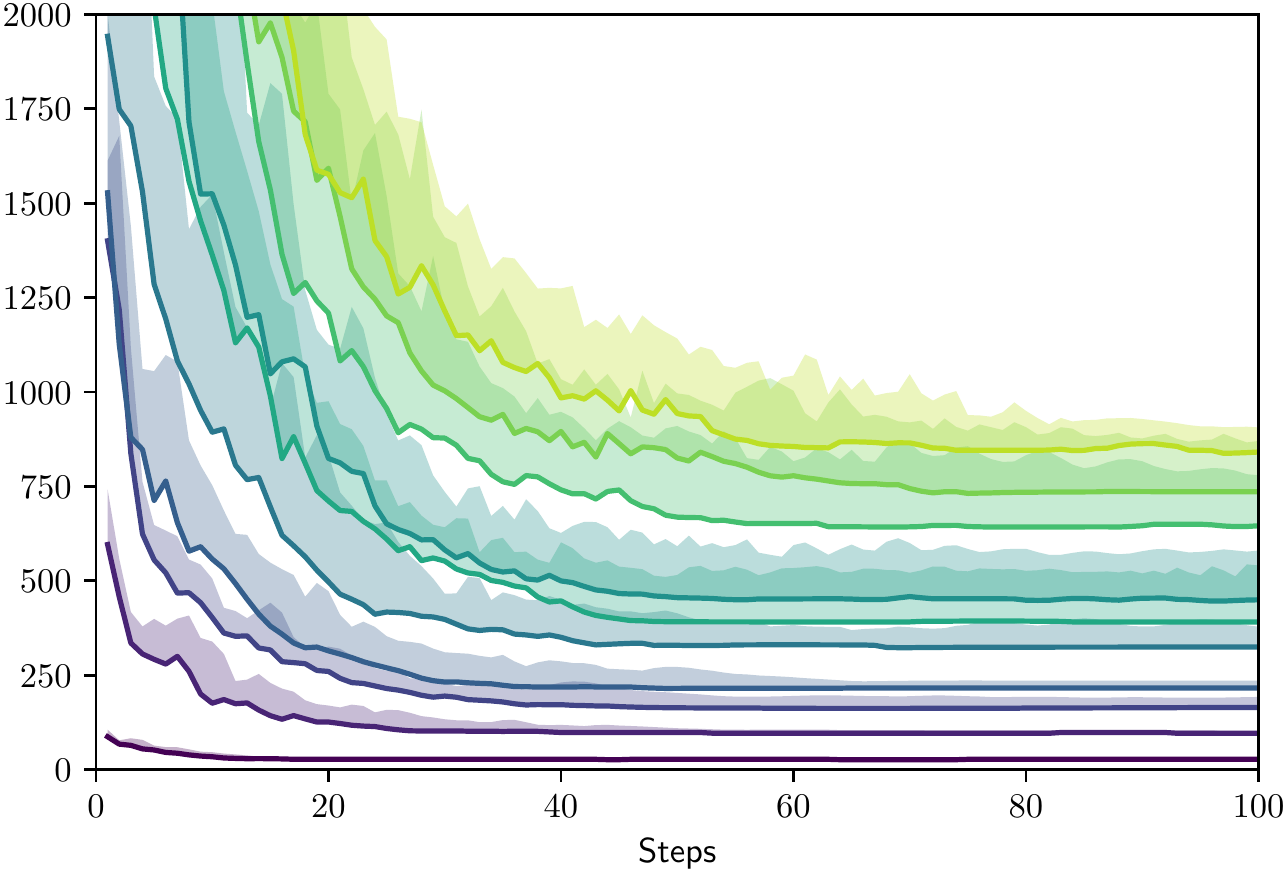}
        \caption{Values of strategies synthesized during the first 120 steps.
				The colored areas show ranges of the obtained values during the optimization process.
				The solid lines highlight the minimal values.}
        \label{fig:grid_convergence}
\end{figure}

\subsection{Patrolling Airport Gates}

One typical application of patrolling is security patrol at airports (see Section~\ref{sec-intro}).
Airport buildings have a specific tree structure of terminals with symmetric gates with a central node connecting the terminals. A terminal typically consists of pairs of gates joined by halls.
A patrolling graph for an airport with three terminals of 4, 2, and 6 gates is shown in Fig.~\ref{fig:airport}. 

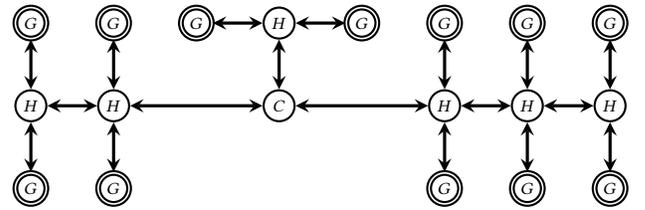
\begin{figure}[htb]
\begin{tikzpicture}[x=1.1cm, y=1.1cm,font=\scriptsize]
    \node [min] (C) at (0,0) {$C$}; 
    \foreach \x in {1,2}{%
       \ifthenelse{\x=1}{%
          \coordinate (a) at (-3,0);
          \foreach \z/\p in {0/0,1/0}{%
              \node [min] (H\x\z) at ($(a) + (\z,0)$)  {$H$};
              \node [min,double] (G0\x\z) at ($(a) + (\z,-1)$) {$G$};
              \node [min,double] (G1\x\z) at ($(a) + (\z,1)$)  {$G$};
              \draw [stealth-stealth,very thick]  (H\x\z) -- (G0\x\z);
              \draw [stealth-stealth,very thick]  (H\x\z) -- (G1\x\z);
              \ifthenelse{\z=0}{}{\draw [stealth-stealth,very thick] (H\x\z) -- (H\x\p);}
          }}{%
          \coordinate (a) at (2,0);
          \foreach \z/\p in {0/0,1/0,2/1}{%
              \node [min] (H\x\z) at ($(a) + (\z,0)$)  {$H$};
              \node [min,double] (G0\x\z) at ($(a) + (\z,-1)$) {$G$};
              \node [min,double] (G1\x\z) at ($(a) + (\z,1)$)  {$G$};    
              \draw [stealth-stealth,very thick]  (H\x\z) -- (G0\x\z);
              \draw [stealth-stealth,very thick]  (H\x\z) -- (G1\x\z);
              \ifthenelse{\z=0}{}{\draw [stealth-stealth,very thick] (H\x\z) -- (H\x\p);}
          }}   
      }
   \node [min] (H0) at (0,1) {$H$};
   \node [min, double] (G0) at (-1,1) {$G$};
   \node [min, double] (G1) at (1,1) {$G$};
   \draw [stealth-stealth,very thick]  (H0) -- (G0);
   \draw [stealth-stealth,very thick]  (H0) -- (G1);
   \draw [stealth-stealth,very thick]  (C)  -- (H20);
   \draw [stealth-stealth,very thick]  (C)  -- (H11);
   \draw [stealth-stealth,very thick]  (C)  -- (H0);
\end{tikzpicture}
\caption{A patrolling graph for an airport with 3 terminals.}
\label{fig:airport}
\end{figure}

We generate a sequence of random airport graphs with 3~terminals and an increasing number of gates determined randomly. Hence, an airport graph with $n$ gates has $n/2$ halls and exactly one central node.
The gates are targets and have exactly one memory element, while halls and the central node are non-target vertices with $4$~memory elements each (thus, the strategy can ``remember'' the previously visited vertex).
The costs of all targets are equal to $1$, and 
all edges have the same traversal time~$1$. 

Since the target costs are the same, we can estimate $\Val$ (the achievable protection) by the length of the shortest cycle visiting all targets, which is $2 (|V|-1)$ where $V$ is the set of vertices. Note that automatic synthesis of a regular strategy with comparable protection is tricky---the synthesis algorithm must ``discover'' the relevance of the previously visited vertex and design the memory updates accordingly.  

We use the same hyperparameters as above.
For each airport, we synthesized 30 strategies, each in 500 steps of iterations. In Fig.~\ref{fig:airport_norm}, we show the protection values of the synthesized strategies for increasing number of vertices \emph{normalized by the baseline $2(|V|-1)$}. For larger $|V|$, 
the optimization converges to locally optimal randomized strategies with worse protection than the deterministic-loop strategy. In particular, for $|V| = 91$, the protection achieved by the constructed strategy is about $33\%$ worse than the baseline on average (with best found strategy loosing $20\%$ above the baseline).

\begin{figure}[htb]
	\includegraphics[width=\columnwidth]{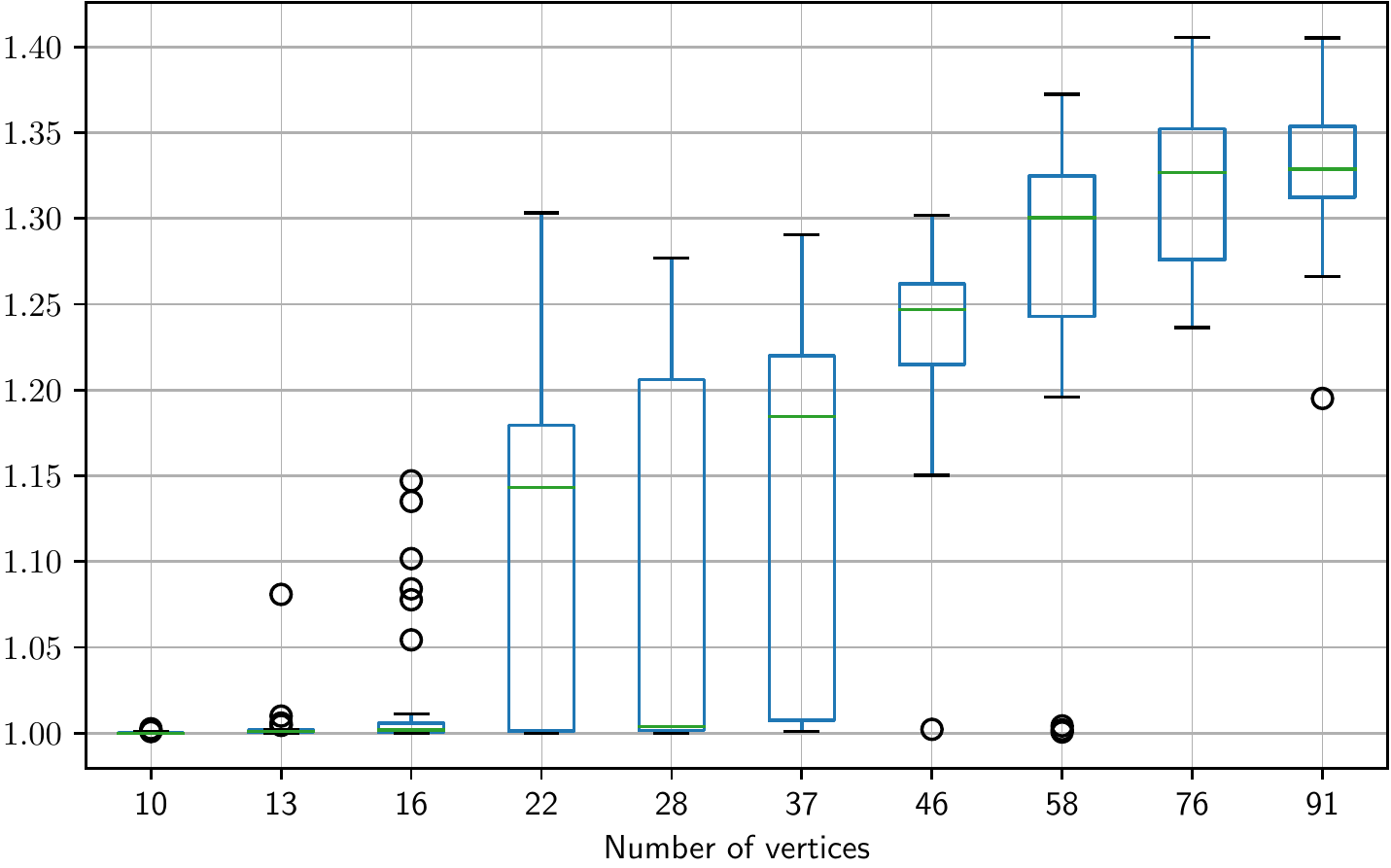}
	\caption{Normalized values of strategies synthesized for airport graphs with increasing number of vertices. The normalization baseline is the shortest cycle visiting all targets with length $2 (v-1)$ where $v$ stands for the number of vertices.}
	\label{fig:airport_norm}
\end{figure}

\section{Conclusion}
\label{sec-concl}

The outcomes show that high-quality Defender's strategies are computed very quickly even for instances of considerable size. Hence, our algorithm can also be used to re-compute a Defender's strategy dynamically when the patrolling scenario changes.

The problem encountered in our experiments is the existence of locally optimal randomized strategies where the optimization loop gets stuck. An interesting question is whether this problem can be overcome by tuning the parameters of gradient descent or by constructing the initial seeds in a more sophisticated way. 

A natural continuation of our study is extending the presented results to scenarios with multiple Defenders and Attackers.

\section*{Acknowledgements}
Research was sponsored by the Army Research Office and was accomplished under
Grant Number W911NF-21-1-0189.

V\'{\i}t Musil is also supported from Operational Programme Research, Development and
Education - Project Postdoc2MUNI (No.~CZ.02.2.69/0.0/0.0/18\_053/0016952).

\paragraph*{Disclaimer}
The views and conclusions contained in this document are those of the authors
and should not be interpreted as representing the official policies, either
expressed or implied, of the Army Research Office or the U.S.\ Government. The
U.S.\ Government is authorized to reproduce and distribute reprints for
Government purposes notwithstanding any copyright notation herein.

\newpage


%

\end{document}